\DeclareMathOperator{\mult}{mult}
\newcommand{\figref}[1]{Fig.~\ref{#1}}
\newcommand{\markedPoint}{\star}
\newcommand{\rotatedmultiplication}[1]{\stackrel{#1}{\curvearrowleft}\,}
\DeclareMathOperator{\rot}{rot}
\newcommand{\morphism}[1]{\xrightarrow{#1}}
\newcommand{\catname}[1]{{\normalfont\textbf{#1}}}
\newcommand{\inv}{^{-1}}
\DeclarePairedDelimiter{\floor}{\lfloor}{\rfloor}
\DeclareMathOperator{\id}{id}
\theoremstyle{plain}
\newtheorem{theorem}{Theorem}
\newtheorem{lemma}[theorem]{Lemma}
\newtheorem{proposition}[theorem]{Proposition}
\theoremstyle{definition}
\newtheorem{definition}{Definition}
\newtheorem{example}{Example}
\newtheorem{fact}{Fact}
\theoremstyle{remark}
\begin{document}

\title{Perfect tangles}

\author{Johannes Berger$^{1}$ and Tobias J.\ Osborne$^{2}$}
\affiliation{$^{1}$Fachbereich Mathematik, Universit\"at Hamburg, Bundesstr.\ 55, 20146 Hamburg, Germany}
\affiliation{$^{2}$Institut f\"ur Theoretische Physik, Leibniz Universit\"at Hannover, Appelstr.\ 2, 30167 Hannover, Germany}

\begin{abstract}
We introduce  \emph{perfect tangles} for modular tensor categories. These are intended to generalise the \emph{perfect tensors} first introduced in the context of a toy model for the AdS/CFT correspondence. We construct perfect tangles for several categories of relevance for topological quantum computation, including the Temperley-Lieb, Fibonacci, Kuperberg spider, and Haagerup planar algebras. A general inductive construction proposed by Vaughan Jones for perfect tangles is also described. 
\end{abstract}

\maketitle

\emph{Holographic duality}, especially in the form of the \emph{AdS/CFT correspondence}, is a profound insight in the study of quantum gravity \cite{maldacena_large-n_1999,maldacena_large_1998}. A tremendous number of advances have resulted from exploring the AdS/CFT correspondence, beginning \cite{gubser_gauge_1998,witten_anti_1998} with the original duality and continuing to the present day. There is now confirmation of the duality for a dizzying zoo of settings (see, e.g., \cite{hubeny_ads/cft_2015, harlow_tasi_2018} for recent related reviews). 

Now that the AdS/CFT correspondence is well understood at the level of fields, the search for microscopic realisations has commenced in earnest. A key breakthrough in this endeavour came in the paper \cite{pastawski_holographic_2015} of Pastawski, Harlow, Yoshida, and Preskill on \emph{holographic codes} where, partly inspired by insights \cite{almheiri_bulk_2015} of Almheiri, Dong, and Harlow, ideas from quantum information theory were exploited to build a kinematical microscopic model of the $\text{AdS}_{3}/\text{CFT}_{1+1}$ correspondence. This model was recently \cite{osborne_dynamics_2017} augmented to include dynamics and promises to give a concrete microscopic realisation of a (discrete) quantum gravity. The holographic code has attracted considerable interest from high energy and quantum information theorists alike, serving as a kind of rosetta stone for new progress at the crossroads between quantum gravity and quantum information research.  

The key building block for the holographic code is an object known as a \emph{perfect tensor}, which is an isometry between any bipartition of its legs \cite{raissi_constructing_2017,li_invariant_2016,peach_tensor_2017,donnelly_living_2016}. These highly nongeneric structures enjoy a variety of interesting features and are interesting in their own right, especially due to their equivalence to \emph{absolutely maximally entangled} (AME) states \cite{gisin_bell_1998,higuchi_how_2000,scott_multipartite_2004,brown_searching_2005,borras_multiqubit_2007,facchi_maximally_2008,facchi_classical_2010,gour_all_2010,zha_maximally_2012,helwig_absolute_2012,arnaud_exploring_2013,helwig_absolutely_2013,helwig_absolutely_2013,goyeneche_absolutely_2015,enriquez_maximally_2016,chen_graph_2016,bernal_existence_2017,huber_absolutely_2017} and \emph{quantum error correcting codes} (QECC) \cite{nielsen_quantum_2000}. Perfect tensors have been generalised in a number of ways to build a variety of new microscopic models of the AdS/CFT correspondence \cite{pastawski_code_2016,hayden_holographic_2016,bao_consistency_2015,yang_bidirectional_2016,bhattacharyya_exploring_2016,may_tensor_2016}. 

The original formulation of the perfect tensor is clearly crucial for holographic codes using tessellations of arbitrary dimensional hyperbolic manifolds. However, it is in two spatial dimensions -- the original formulation of the holographic code -- where the model acquires the greatest richness. This is due to a variety of reasons, not least of which is the existence of nontrivial quasiparticles known as \emph{anyons} \cite{pachos_introduction_2012,wang_topological_2010,wilczek_fractional_1990}. In this \emph{planar} setting it is reasonable to slightly relax the definition of a perfect tensor so that it is only required to be an isometry with respect to \emph{planar bipartitions} of the legs. It may be argued that the pleasing properties of the holographic code construction are preserved and in this case we call the resulting object a \emph{perfect tangle}.

In this Letter we argue that perfect tangles exist in a wider class of categories than just that of quantum spin systems. We explicitly show how to construct perfect tangles for categories of central importance for topological quantum computation, particularly, the Temperley-Lieb algebra and the Fibonacci category. Exact solutions are also provided for several other cubic categories. We conclude with an inductive procedure to build perfect tangles from smaller ones.

\emph{The Fibonacci category and the Temperley-Lieb algebra.---}We exemplify our constructions throughout in terms of a model of anyons known as \emph{Fibonacci anyons}, which are understood \cite{read_beyond_1999} to model the quasiparticles for a fractional quantum Hall liquid at filling fraction $\nu = 2+2/5$ \cite{xia_electron_2004}. The mathematical structure appropriate for the kinematical description of $N$ such Fibonacci anyons is a \emph{modular tensor category} (MTC) \cite{wang_topological_2010} known as the \emph{Fibonacci} or \emph{golden} category corresponding to $\textsl{SO}(3)_q$ at $1+ q^2+q^{-2}=\frac{1+\sqrt{5}}{2}$. This theory is described by two particle types, traditionally denoted $1$ and $\tau$, with the fusion rule $\tau\times \tau = 1+\tau$. The Hilbert space $\mathcal{H}_N$ of $N$ $\tau$-anyons is spanned by all possible fusion histories. There is a convenient orthonormal basis for $\mathcal{H}_N$ given by \emph{fusion paths}: 
\begin{equation}
\begin{tikzpicture}[scale=1,baseline=-1ex]
	\draw (-3.5,0) -- (3.5,0);
	\foreach \x in {-2.5,-1.5,-0.5,0.5,1.5,2.5}{
		\draw (\x, 0.0) -- (\x,0.8) node[pos=1.2] {$\tau$};
	}
	\foreach \x in {1,...,4}{
		\draw (\x - 3, -0.2) node {$x_{\x}$};
	}
	\draw (2,-0.2) node {$\cdots$};
	\draw (-3.7,0) node {$\tau$};
	\draw (3.7,0) node {$\tau$};
\end{tikzpicture},
\end{equation}
where $x_j \in \{1,\tau\}$, $j=1,2,\ldots, N-3$. In the category formulation we call the linear space spanned by fusion histories starting with $j$ $\tau$-anyons and ending with $k$ $\tau$-anyons the \emph{morphism space} $\text{Hom}(\tau^{\otimes j},\tau^{\otimes k})$. In this way $\mathcal{H}_N \equiv \text{Hom}(1,\tau^{\otimes N})$. 

The Fibonacci category contains more structure because it is an example of a \emph{planar algebra} (see the Supplementary Material for further details) where diagrams can be multiplied by stacking: whenever we encounter a loop we remove it and replace it with a factor $d$. The Fibonacci category contains a subalgebra spanned by all noncrossing planar diagrams known as the \emph{Temperley-Lieb algebra} $\textsl{TL}(q)$. We denote by $\textsl{TL}_{n}(q)$ the vector space of all noncrossing planar diagrams  with $n$ incoming and $n$ outgoing legs \footnote{For simplicity we will often omit the $q$ argument.}.

\emph{Perfect and planar perfect tensors.---}Before we get to the definition of a planar tangle it is helpful to review the perfect tensor definition introduced in \cite{pastawski_holographic_2015}. Recall that an \emph{$n$-leg tensor} is a set of (complex) numbers indexed by $n$ copies of an indexing set $J$. The \emph{range} of an index is the number $d=\lvert J \rvert$ \footnote{In the present setting this is sufficient. More generally, one could require the indices for the legs to be indexed by $n$ different indexing sets, but this is not necessary here. The range of the $j$th index would then be the size of the $j$th indexing set.}, and we commonly call the indices \emph{legs}. In particular, any $n$-leg tensor whose legs range from $1$ to $d$ can be seen as a vector in $\mathbb{C}^{d^n}$. But we can also interpret such a tensor as the coefficients of a linear map $T$ between complex vector spaces. A \emph{perfect tensor} is now an $n$-leg tensor $C=\{C_{j_1\cdots j_n}\}$, with indices ranging from $1$ to $d$, such that for any bipartition $A\cup A^c$ of $\{1,\ldots, n\}$ with $\lvert A \rvert\leq \lvert A^c \rvert$ the linear map
\begin{align*}
C^{A\to A^c} : \mathcal{H}_A \rightarrow\mathcal{H}_{A^c}
\end{align*}
defined by (sums implied)
\begin{align*}
C^{A\to A^c}\equiv C_{j_1\cdots j_n} \ket{j_i}_{i\in A^c}\bra{j_l}_{l\in A}
\end{align*}
is proportional to an isometry, or, equivalently, 
\begin{align*}
\left(C^{A\to A^c}\right)^\dagger C^{A\to A^c} \propto \mathbf{1}_{\mathcal{H}_A},
\end{align*}
where $\mathcal{H}_A \equiv \bigotimes_{j\in A} \mathbb{C}^d$. Perfect tensors are highly overdetermined, and it is remarkable that nontrivial solutions exist at all. At the present time there are now a wide variety of examples \cite{raissi_constructing_2017,li_invariant_2016,peach_tensor_2017,donnelly_living_2016}. The first are given by generalising the connection to quantum error correcting codes, as explained in \cite{pastawski_holographic_2015}. A second class of examples follows by exploiting the equivalence of perfect tensors with \emph{absolutely maximally entangled} states \cite{gisin_bell_1998,higuchi_how_2000,scott_multipartite_2004,brown_searching_2005,borras_multiqubit_2007,facchi_maximally_2008,facchi_classical_2010,gour_all_2010,zha_maximally_2012,helwig_absolute_2012,arnaud_exploring_2013,helwig_absolutely_2013,helwig_absolutely_2013,goyeneche_absolutely_2015,enriquez_maximally_2016,chen_graph_2016,bernal_existence_2017,huber_absolutely_2017}.  

Our focus, in this Letter, is on a wider class of objects than perfect tensors which are only required to be perfect with respect to \emph{planar} or \emph{connected bipartitions} \footnote{Our notion of planar perfect coincides with \emph{block perfect tensors} independently introduced by G.\ Brennen, R.\ Harris, and T.\ Stace.}: given a set $I=\{1,\ldots, n\}$, a connected bipartition $A\cup A^c$ of $I$ is such that $A$ consists of subsequent numbers, modulo $n$. For example, if $n=3$ then this is true for every bipartition, but if $n=4$, then the bipartition $\{1,3\}\cup\{2,4\}$ is not connected, while $\{1,4\}\cup\{2,3\}$ surely is.
An $n$-leg tensor $T=\{T_{j_1\cdots j_n}\}_{j_k=1}^d$ is \emph{planar perfect} if for any connected bipartition of its legs, say $A\cup A^c$, with $\lvert A \rvert\leq \floor{\frac{n}{2}}$ the associated map
$T^{A\rightarrow A^c}$ is proportional to an isometry. Any perfect tensor is also planar perfect. But the converse is false in general. An example, due to G.\ Brennen, R.\ Harris, and T.\ Stace, of a planar perfect tensor is given by the Steane code \cite{steane_multiple-particle_1996}. 

\emph{Perfect tangles for the Temperley-Lieb algebra.---}Before presenting the full conditions for a perfect tangle we first write out our first nontrivial example for the Temperley-Lieb algebra $\textsl{TL}_2(q)$ with $0<q\leq 2$:
\begin{equation}
	T \equiv \alpha\,  \tikz[scale=0.5, baseline=-1mm]{\foreach \x in {0, 0.5} \draw (\x, -0.5) -- (\x, 0.5);}
 \,+\beta\, \tikz[scale=0.5, baseline=-1mm]{\draw (0, 0.5) arc[start angle=-180, end angle=0, radius=2.5mm]; \draw (0, -0.5) arc[start angle=180, end angle=0, radius=2.5mm];}\,,
\end{equation}
where the perfect condition requires $|\alpha|=|\beta|$, $\alpha\overline{\beta} + \overline{\alpha}\beta + q|\alpha|^2 = 0$, and $\alpha\overline{\beta} + \overline{\alpha}\beta + q|\beta|^2 = 0$. Choosing normalisations, we can set $\alpha = 1$ and $\beta = e^{\pm i f(q)}$, where $f(q) = \arccos(-q/2)$. One can check that $T$ so defined obeys the equation $TT^\dag = \mathbb{I}$ if one multiplies vertically by stacking. Multiplying sideways also yields the (horizontal) identity. Finally, with respect to, e.g., the bipartition $\{1\}\cup\{234\}$ we see that $T$ is proportional to an isometry (the proportionality factor is $\sqrt{q}$). This tangle is an example of a \emph{biunitary connection} \cite{goodman_coxeter_1989,evans_takesaki_1989,jones_planar_1999,jones_classification_2014,morrison_little_2014,reutter_biunitary_2016}, which plays an important role in the construction of subfactors and which, remarkably, also obeys the \emph{Yang-Baxter equation} \cite{goodman_coxeter_1989}. It is this latter observation that provides the key insight into all our constructions.

Now that we have an example at hand we present the conditions a perfect tangle must satisfy. To do this we introduce the \emph{one-click rotation} operation $\text{rot}:\textsl{TL}_{k}(q)\rightarrow \textsl{TL}_{k}(q)$ which rotates anticlockwise the strands of a tangle by one step. This operation is illustrated here for $k=3$:
\begin{equation}
\text{rot}\left(\ \begin{tikzpicture}[scale=0.7,baseline=-1ex]
	\draw (-0.5,0.5) rectangle (0.5,-0.5) node[pos=.5] {$T$}; 
	\foreach \x in {0,0.4}
		\draw (\x, 0.5) -- (\x,0.8);
	\draw (-0.4,-0.5) -- (-0.4,-0.8);
	\draw (-0.4,0.5)  -- (-0.4,0.8);
	\draw (0.4,-0.5) -- (0.4,-0.8);
	\draw (0,-0.5) -- (0,-0.8);				
\end{tikzpicture}\ \right) = 
\begin{tikzpicture}[scale=0.7,baseline=-1ex]
	\draw (-0.5,0.5) rectangle (0.5,-0.5) node[pos=.5] {$T$}; 
	\foreach \x in {0,0.4}
		\draw (\x, 0.5) -- (\x,0.8);
	\draw (-0.4,-0.5) -- (-0.4,-0.8);
	\draw (-0.4,0.5) arc [start angle=0, end angle=180, radius=2mm] -- (-0.8,-0.8);
	\draw (0.4,-0.5) arc [start angle=180, end angle=360, radius=2mm] -- (0.8,0.8);
	\draw (0,-0.5) -- (0,-0.8);				
\end{tikzpicture}\ .
\end{equation}
Using rotation we introduce the \emph{rotated multiplication} operation for two vectors or \emph{tangles} $T$ and $S$, written infix, as
\begin{equation}
	T \rotatedmultiplication{n} S = \rot^n(T)\cdot \rot^{-n}(S),
\end{equation}
where $\cdot$ denotes multiplication via stacking and a negative exponent indicates \emph{clockwise rotation}. With the definition of rotated multiplication in hand we define a \emph{perfect tangle} $T$ to be any vector $T\in \textsl{TL}_k(q)$ which satisfies
\begin{equation}
	T \rotatedmultiplication{n} T^\dagger \propto T^\dagger \stackrel{n}{\curvearrowright} T \propto \mathbf{1}_k 
\end{equation}
for all $0\le n\le k$, where $\mathbf{1}_k$ is the identity (the multiplicative unit) and $\stackrel{n}{\curvearrowright}$ denotes the rotated multiplication with the first argument rotated clockwise and the second anticlockwise. These conditions are illustrated in Fig.~\ref{fig:PerfectTangleCondition}.

\begin{figure}[!htp]\centering
	\begin{tikzpicture}[scale=1.3]
		\begin{scope}
			\coordinate (center) at (0,0);
			\draw ($(center) + (-0.6,0.8)$) rectangle ($(center) + (0.6,0.3)$);
			\draw ($(center) + (-0.6,-0.8)$) rectangle ($(center) + (0.6,-0.3)$);

			\node (T) at ($(center) + (0,0.55)$) {$T$};
			\node (TStar) at ($(center) + (0,-0.55)$) {$T^\dagger$};
			
			\draw ($(center) + (-0.5,0.8)$) arc [start angle = 0, end angle= 180, radius=1.5mm] -- node[midway, left] {$n$} 
					($(center) + (-0.8,-0.8)$) arc [start angle = 180, end angle=360, radius=1.5mm];
			\draw ($(center) + (0.5,0.3)$) arc [start angle = 180, end angle=360, radius=1.5mm] --
					($(center) + (0.8,1.2)$);
			\draw ($(center) + (0.5,-0.3)$) arc [start angle = 180, end angle=0, radius=1.5mm] --
					($(center) + (0.8,-1.2)$);
					
			\draw ($(center) + (-0.3,0.8)$) -- ($(center) + (-0.3,1.2)$) node [midway, right] {\hspace*{-1mm}\footnotesize $k-n$};
			\draw ($(center) + (-0.3,-0.8)$) -- ($(center) + (-0.3,-1.2)$) node [midway, right] {\hspace*{-1mm}\footnotesize $k-n$};
			\draw ($(center) + (-0.3,-0.3)$) -- ($(center) + (-0.3,0.3)$) node [midway, right] {\hspace*{-1mm}\footnotesize $k-n$};
		\end{scope}
		\node (propto) at (1.5,0) {$\propto$};
		\begin{scope}
			\coordinate (center) at (3,0);
			\draw ($(center) + (-0.6,0.8)$) rectangle ($(center) + (0.6,0.3)$);
			\draw ($(center) + (-0.6,-0.8)$) rectangle ($(center) + (0.6,-0.3)$);

			\node (T) at ($(center) + (0,0.55)$) {$T^\dagger$};
			\node (TStar) at ($(center) + (0,-0.55)$) {$T$};
			
			\draw ($(center) + (0.5,0.8)$) arc [start angle = 180, end angle= 0, radius=1.5mm] -- 
					($(center) + (0.8,-0.8)$) arc [start angle = 360, end angle=180, radius=1.5mm];
			\draw ($(center) + (-0.5,0.3)$) arc [start angle = 360, end angle=180, radius=1.5mm] -- node[midway, left] {$n$} 
					($(center) + (-0.8,1.2)$);
			\draw ($(center) + (-0.5,-0.3)$) arc [start angle = 0, end angle=180, radius=1.5mm] --
					($(center) + (-0.8,-1.2)$);
					
			\draw ($(center) + (-0.3,0.8)$) -- ($(center) + (-0.3,1.2)$) node [midway, right] {\hspace*{-1mm}\footnotesize $k-n$};
			\draw ($(center) + (-0.3,-0.8)$) -- ($(center) + (-0.3,-1.2)$) node [midway, right] {};
			\draw ($(center) + (-0.3,-0.3)$) -- ($(center) + (-0.3,0.3)$) node [midway, right] {};
		\end{scope}
		\node (propTO) at (4.5,0) {$\propto$};
		\begin{scope}
			\coordinate (center) at (5,0);
			\draw  ($ (center) + (0,-1.2)$)  -- ($ (center) + (0,1.2)$) node [midway, right] (TextNode) {\hspace*{-1mm}\footnotesize $k$};
		\end{scope}
	\end{tikzpicture}
	\caption[Definition of perfect planar tangle]{The defining conditions for a $k$-tangle $T$ to be perfect. A number $n$ next to a string is shorthand for $n$ parallel strings.}
	\label{fig:PerfectTangleCondition}
\end{figure}
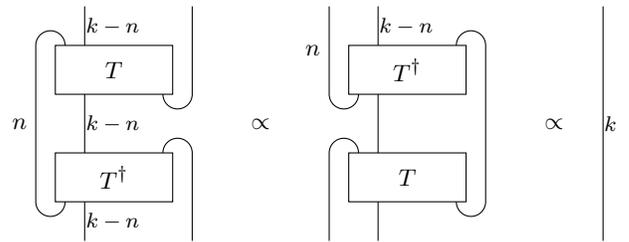
The number of equations obtained from computing the products of all $2n$ rotations of a general Temperley-Lieb $n$-tangle is rather frightening. Indeed, if the coefficients are assumed to be complex, then the number of (real) variables is $2$ times the dimension, i.e.\
\begin{equation*}
2 C_n = 2 \cdot \frac{1}{n+1}\binom{2n}{n},
\end{equation*}
where $C_n$ are the \emph{Catalan numbers}. The first few terms are $2,4,10,28,84,264,\ldots$, and the maximum number of defining equations for a perfect $n$-tangle is then $2nC_n$, which grows even more rapidly. As for the planar perfect tensor case, this system of equations is overdetermined. 

Luckily, we can reduce this a bit. In $\textsl{TL}_n$ there are no extra relations for the tangles, i.e., it is a \emph{free} vector space. It is thus necessary that a perfect tangle has a nonzero component, or \emph{support}, on $\mathbf{1}_n$, which allows us to normalize the coefficient of $\mathbf{1}_n$ so it has modulus $1$. For the sake of simplicity, and without loss of generality, we set $\alpha=1$. This already reduces the number of coefficients by two \footnote{But beware: This does not work in other situations, like trivalent categories for example.}.

We could also only look, say, at rotation-invariant tangles, i.e.\ $n$-tangles $T$ with $\rot(T)=T$. We then only have to compute $T\cdot T^\dagger$ and find solutions to get perfect tangles. Another possibility is to restrict to self-adjoint tangles, that is, tangles $T$ with $T^\dagger = T$. At the very least the generators of $\textsl{TL}_n$ (of which there are $n-1$) are self-adjoint, so that their respective coefficients have to be real. 

One can, for small values of $n$, numerically solve the defining equations and locate approximate solutions. Such approximate solutions provide initial guesses for exact solutions. In this way we have been able to build examples for self-adjoint examples in $\textsl{TL}_3$:
\begin{equation*}
T=
\begin{tikzpicture}[scale = 0.7, baseline=-1mm]
		\foreach \x in {-0.5,0,0.5}
			\draw (\x, 0.8) -- (\x,-0.8);
	\end{tikzpicture}
	+ \alpha\,
		\begin{tikzpicture}[scale = 0.7, baseline=-1mm]
			\draw (0,0.8) .. controls (0,-0.5+0.8) and (0.5,-0.5+0.8) .. (0.5,0.8);
			\draw (-0.5,-0.8) .. controls (-0.5,1-0.5-0.8) and (0,1-0.5-0.8) .. (0,-0.8);
			\draw (-0.5,0.8) .. controls (-0.5,0) and (0.5,0) .. (0.5, -0.8);
		\end{tikzpicture}
	+\overline{\alpha}\,
		\begin{tikzpicture}[scale = 0.7, baseline=-1mm]
			\draw (-0.5,0.8) .. controls (-0.5,-0.5+0.8) and (0,-0.5+0.8) .. (0,0.8);
			\draw (-0,-0.8) .. controls (-0,1-0.5-0.8) and (0.5,1-0.5-0.8) .. (0.5,-0.8);
			\draw (-0.5,-0.8) .. controls (-0.5,0) and (0.5,0) .. (0.5, 0.8);
		\end{tikzpicture}\,
+\beta\,
	\begin{tikzpicture}[scale = 0.7, baseline=-1mm]
		\draw (-0.5,0.8) -- (-0.5,-0.8);
		\draw (0,0.8) .. controls (0,-0.5+0.8) and (0.5,-0.5+0.8) .. (0.5,0.8);
		\draw (-0,-0.8) .. controls (-0,1-0.5-0.8) and (0.5,1-0.5-0.8) .. (0.5,-0.8);
	\end{tikzpicture}
	\,+ \,\gamma\,
	\begin{tikzpicture}[scale = 0.7, baseline=-1mm]
		\draw (0.5,0.8) -- (0.5,-0.8);
		\draw (-0.5,0.8) .. controls (-0.5,-0.5+0.8) and (0,-0.5+0.8) .. (0,0.8);
		\draw (-0.5,-0.8) .. controls (-0.5,1-0.5-0.8) and (0,1-0.5-0.8) .. (0,-0.8);
	\end{tikzpicture}\,,
\end{equation*}
where $\beta$ and $\gamma$ are necessarily real. For $q=2$, the only solution is given by
\begin{align*}
\alpha = 1,\qquad
\beta= -1,\qquad
\gamma = -1.
\end{align*}
If now $q=2\cos\frac{\pi}{n}$ for $n\geq 3$, then we define
\begin{align*}
f_{SA}(q) \equiv \frac{\pm \sqrt{4 - q^2} - 2}{q}.
\end{align*}
With this, the self-adjoint solutions are given by
\begin{align*}
\alpha = 1,\qquad
\beta= f_{SA}(q),\qquad
\gamma = \frac{1}{f_{SA}(q)}.
\end{align*}
Finally if $n=6$, that is $q=\sqrt{3}$, another solution given by
\begin{align*}
\alpha = 1,\qquad
\beta= -q,\qquad
\gamma = -q
\end{align*}
exists.

In the case of rotation-invariant solutions we have $\alpha = 1$, and $\beta = \gamma$. In fact, we have already seen one solution above: If $q=2$, then the self-adjoint tangle is also clearly rotationally invariant. If $q=2\cos\frac{\pi}{n}$ for $n\geq 6$ then we have solutions described by
\begin{align*}
\beta = 
-\frac{q}{q^2-2}
\pm i
\sqrt{-\frac{q^4-7 q^2+12}{\left(q^2-2\right)^2}}.
\end{align*}
These are in fact valid whenever $\sqrt{3}\leq q < 2$.

As an aside, we also found that rotation-invariant perfect tangles in $\textsl{TL}_4$ would require $q<0$, and thus do not exist, according to our convention $q>0$.

\emph{Perfect morphisms for trivalent categories.---}Our construction of perfect tangles for the Temperley-Lieb algebra automatically provides us with a ready supply of examples for any modular tensor category containing it as a subalgebra. These morphisms, by definition, do not include trivalent vertices. 

With the appropriate modification of the definition for \emph{perfectness} in the case of an uneven number of legs, there are then \emph{perfect morphisms} that nontrivially involve the trivalent vertex. The most obvious is the trivalent vertex itself. We focus here on perfect morphisms in $\mathcal{C}_4 \equiv \mathrm{Hom}\left( 1, X^{\otimes 4} \right)$ because, as we'll see, we'll be able to inductively construct other perfect morphisms in $\mathcal{C}_{2k}$ for $k\geq 2$. No obvious construction for odd $n$ is currently known.
 
Calling the loop and triangle values $d$ and $t$, respectively, we restrict our attention to \emph{cubic categories}, i.e.\ trivalent categories with $\dim \mathcal{C}_4=4$ and such that the diagrams without internal faces give a basis for that hom-space \cite{morrison_categories_2017}. There is thus a relation for the square, to wit
\begin{align*}
\begin{tikzpicture}[scale = 0.5, baseline=-1mm]
	\draw (-1,1)  -- (-0.7,0.7) -- (0.7,0.7) -- (1,1);
	\draw (-1,-1)  -- (-0.7,-0.7) -- (0.7,-0.7) -- (1,-1);	
	\draw (-0.7,0.7) -- (-0.7,-0.7);
	\draw (0.7,0.7) -- (0.7,-0.7);	
\end{tikzpicture}
=
A\cdot \left( \,
	\begin{tikzpicture}[scale=0.5, baseline=-1mm]
		\begin{scope}
			\foreach \x in {0, 1} \draw (\x, -1) -- (\x, 1);
		\end{scope}
		\begin{scope}[shift={(2.2,0)}]
			\node (beta) at (-0.5,0) {$+$};
			\draw (0,1) arc[start angle=180, end angle=360, radius=5mm];
			\draw (0,-1) arc[start angle=180, end angle=0, radius=5mm];
		\end{scope}
	\end{tikzpicture}
\, \right)
+ B\cdot \left( \,
	\begin{tikzpicture}[scale=0.5, baseline=-1mm]
		\begin{scope}[shift={(0,0)}]
			\draw (-0.8,1) -- (-0.4, 0) -- (0.4,0) -- (0.8,1);
			\draw (-0.8,-1) -- (-0.4, 0) -- (0.4,0) -- (0.8,-1);
		\end{scope}
		\begin{scope}[shift={(2.2,0)}]
			\node (delta) at (-1,0) {$+$};
			\draw (0.7, 1) -- (0, 0.5) -- (-0.7, 1);
			\draw (0.7, -1) -- (0, -0.5) -- (-0.7, -1);
			\draw (0, 0.5) -- (0, -0.5);
		\end{scope}
	\end{tikzpicture}
	\, \right),
\end{align*}
where $A = \frac{d t^2 + t^2 -1}{dt + d +t}$ and $B = \frac{-t^2 + t+1}{dt + d +t}$. 
Consequently, it is no longer true that a morphism is perfect only if it has support on the identity.

A general element in $\mathcal{C}_4$ is of the form
\begin{align*}
T \equiv  \,
\begin{tikzpicture}[scale=0.5, baseline=-1mm]
	\begin{scope}
		\node (alpha) at (-0.5,0) {$\alpha$};
		\foreach \x in {0, 1} \draw (\x, -1) -- (\x, 1);
	\end{scope}
	\begin{scope}[shift={(2.2,0)}]
		\node (beta) at (-0.5,0) {$+\,\beta$};
		\draw (0,1) arc[start angle=180, end angle=360, radius=5mm];
		\draw (0,-1) arc[start angle=180, end angle=0, radius=5mm];
	\end{scope}\, 
	\begin{scope}[shift={(5.2,0)}]
		\node (gamma) at (-1.4,0) {$+\,\gamma$};
		\draw (-0.8,1) -- (-0.4, 0) -- (0.4,0) -- (0.8,1);
		\draw (-0.8,-1) -- (-0.4, 0) -- (0.4,0) -- (0.8,-1);
	\end{scope}
	\begin{scope}[shift={(7.5,0)}]
		\node (delta) at (-1,0) {$+\,\delta$};
		\draw (0.7, 1) -- (0, 0.5) -- (-0.7, 1);
		\draw (0.7, -1) -- (0, -0.5) -- (-0.7, -1);
		\draw (0, 0.5) -- (0, -0.5);
	\end{scope}
\end{tikzpicture}
,
\end{align*}
and we descibe solutions only in terms of coefficients. The morphism $T$ is invariant under rotation by $\pi$ and taking the adjoint results in conjugation of the coefficients. One rotation exchanges $\alpha$ and $\beta$, and $\gamma$ and $\delta$. There are hence $8$ real equations that must be satisfied by a perfect morphism \footnote{The 8 equations are 
$\lvert \alpha \rvert^2 + A \lvert \gamma \rvert^2 \neq 0$, $\lvert \beta \rvert^2 + A \lvert \delta \rvert^2 \neq 0$,
$d\lvert \beta \rvert^2 + (\alpha+\gamma, \beta)+ A \lvert \gamma \rvert^2 = 0$, $d\lvert \alpha \rvert^2 + (\beta+\delta, \alpha)+ A \lvert \delta \rvert^2 = 0$, $(\alpha,\gamma) + B\lvert \gamma \rvert^2 = 0$, $(\beta,\delta) + B\lvert \delta \rvert^2 = 0$,
$\lvert \delta \rvert^2 + (\alpha+t \gamma, \delta) + B\lvert \gamma \rvert^2 = 0$, $\lvert \gamma \rvert^2 + (\beta+t \delta, \gamma) + B\lvert \delta \rvert^2 = 0$, where we've employed the notation $(\alpha, \beta) \equiv \alpha\overline{\beta} + \overline{\alpha}\beta = 2 \left( \Re{\alpha}\Re{\beta} + \Im{\alpha}\Im{\beta}\right) \in\mathbb{R}$. These equations are real, and, since $(\alpha,\beta) = (\overline{\alpha},\overline{\beta})$, indeed cover all rotations.}
.

The Fibonacci category obeys additional relations, so that $\dim\mathcal{C}_4 = 2$, i.e., a general element in $\mathcal{C}_4$ is an element of $\textsl{TL}_2$. As a consequence we have already described all the available solutions for this case. The first nontrivial example is given by \emph{Kuperberg's Spider} $G_2$, in particular, the cubic category $\left( G_2 \right)_q$ at $q$ a primitive $8$th root of unity \cite{morrison_categories_2017}. In such $\left( G_2 \right)_q$-categories,
\begin{align*}
d = q^{10}+ q^8+ q^2+ 1 + q^{-2}+ q^{-8}+ q^{-10}
\end{align*}
and
\begin{align*}
t = - \frac{q^2 - 1 + q^{-2}}{q^4 + q^{-4}}.
\end{align*}
(Due to restricted computational power, a general solution for arbitrary roots of unity $q$ could not be produced so far.)

If $q$ is an $8$th root of unity, then one computes 
\begin{align*}
d = 3,\quad t= -\frac{1}{2}, \quad A = \frac{1}{4}, \quad B = 0,
\end{align*}
and we can give solutions for perfect morphisms. If $\alpha \neq 0$ we may normalize to $\alpha = 1$. If, on the other hand, $\alpha = 0$, one requires that $\delta = 0$ and, without loss of generality, we normalize so that $\beta = 1$. The two simplest solutions are then given by 
\begin{align*}
&
\alpha = 1, \qquad 
\beta = 0, \qquad
\gamma = 0, \qquad
\delta = -2 \\
\text{or}\quad&
\alpha = 0, \qquad 
\beta = 1, \qquad
\gamma = -2, \qquad
\delta = 0,
\end{align*}
which are rotations of one another. The first one is actually just the $r=0$ case of the simple solution
\begin{align*}
\alpha = 1, \qquad \beta = i r, \qquad \gamma = -2 i r ,\qquad \delta = -2,
\end{align*}
valid for $r\in\mathbb{R}$, and the second is the only solution with $\alpha = 0$, $\beta=1$. Finally, for $r \in (-4,0)$, we get a last set of solutions
\begin{multline*}
\alpha = 1, \quad 
\beta = r \pm i (r - 1) f(r) ,\quad
\gamma = \pm i 10  f(r) , \\ \text{and} \quad
\delta = \mp i \frac{2\beta}{f(r)},
\end{multline*}
where $f(r)\equiv \sqrt{\frac{-r}{4+r}}$.

The final example we study is the \emph{Haagerup fusion category} $H3$ \cite{morrison_categories_2017} which has, by convention, loop and triangle values
\begin{align*}
d = \frac{3 +\sqrt{13}}{2},\qquad t = \frac{2-\sqrt{13}}{3}.
\end{align*}
It is cubic, so we employ the equations we had before to find solutions. We found that a perfect morphism needs to be supported on the identity. Two solutions are then given by
\begin{multline*}
\alpha = \frac{4}{ \sqrt{3}}, \quad \beta =  \frac{d}{\sqrt{3}} \mp i \sqrt{5 - d}, \\ \gamma = - 2 \sqrt{3}  (d + 1) \pm i2 \sqrt{5d + 2},\\ \text{and}\quad \delta = - 3 \sqrt{3} d \pm i \sqrt{5 - d},
\end{multline*}
where we did not normalize to $\alpha = 1$ for readability reasons. 

\emph{An inductive construction.---}Here we describe an inductive procedure proposed by Vaughan Jones whereby one can produce larger perfect tangles from smaller perfect tangles. This construction is reminiscent of that explored by Reutter and Vicary \cite{reutter_biunitary_2016} to build biunitary connections. The intuition for the construction is entirely captured by the following \emph{braid heuristic}: Consider the picture
\begin{equation}
	A \,\sim \raisebox{-1cm}{\includegraphics{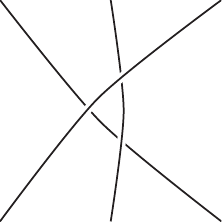}},
\end{equation}
which represents a tangle in some (braided) category, with the over- and under-crossings given by braidings. One can confirm that this tangle is indeed perfect. The idea behind the inductive construction is to replace the braidings by perfect $2$-tangles. Surprisingly this works, even for categories without braiding. To demonstrate this assertion we fix some perfect $2$-tangle $T$ and inductively define $\widetilde{T}_n$, $n\ge 3$ according to
\begin{equation*}
	\begin{tikzpicture}[baseline=-0.5mm]
		\node (T) at (-2,0) {$\widetilde{T}_n \equiv\,$};
		\node[draw] (T1) at (0,0) {$\widetilde{T}_{n-1}$};
		\node[draw] (T2) at (2,0) {$T$};
		\path ($(T1.north) + (-0.2,0)$) edge node[left]{$n-2$} ($(T1.north) + (-0.2,0.5)$)
			($(T1.south)$) edge node[left]{$n-1$} ($(T1.south) + (0,-0.5)$)
			($(T2.north) + (-0.2,0)$) edge ($(T2.north) + (-0.2,0.5)$)
			($(T2.north) + (0.2,0)$) edge ($(T2.north) + (0.2,0.5)$)
			($(T2.south) + (0.2,0)$) edge ($(T2.south) + (0.2,-0.5)$);		
		\draw  ($(T1.north) + (0.2,0)$) .. controls (1.2,1.5) and (0.8,-1.5).. ($(T2.south) + (-0.2,0)$);
	\end{tikzpicture}\quad .
\end{equation*}
Using $\widetilde{T}_n$ we now construct the tangle $A_n$ via
\begin{align*}
A_n \equiv \quad
	\begin{tikzpicture}[baseline = -1mm]
		\node[draw] (T) at (0,0.7) {$\widetilde{T}_{n}$};
		\node[draw] (B) at (0.5, -.7) {$A_{n-1}$};
		\path (T.north) edge ($(T.north) + (0,0.5)$) 
			($(T.south) + (-0.2, 0)$) edge ($(T.south |- B.south) + (-0.2, -0.5)$)
			($(T.south) + (0.2,0)$) edge node[right]{$n-1$} (B.north)
			($(B.south)$) edge ($(B.south |- B.south) + (-0.0, -0.5)$);
	\end{tikzpicture}\quad .
\end{align*}
One can confirm that $A_n$ so defined yields a perfect tangle (the proof of this may be found in the supplementary material).

\emph{A conjecture.---}
Emboldened by the success of the braid heuristic, it is plausible that the following \emph{horizontal construction}, which is illustrated in the next picture, will always yield a perfect tangle. Fix an $n$-tangle $T$ and an $m$-tangle $S$, both perfect, and a perfect $2$-tangle $R$. Construct an $(n+m)$-tangle by first building $T\otimes S$, i.e.\ draw the tangles next to each other, as boxes in standard form. Then braid the bottom legs in any way you want, but subject to the requirement that strings coming from the same box don't cross. Next, mirror the same braiding for the upper legs as seen in the graphic. Finally, replace \emph{all} crossings by $R$, disregarding orientation. If we call the braiding of the bottom legs $\mathcal{L}$, then the tangle obtained via the above procedure shall be denoted by the symbol $T \oslash_\mathcal{L} S$.
\begin{equation*}
T \oslash_\mathcal{L} S \quad\equiv\quad
\begin{tikzpicture}[baseline=-0.1cm, scale=2]
	\node[draw, blue,line width=0.5pt] (A) at (-1,0) {$\phantom{\hat{ABC}^2}$};
	\node[draw, red,line width=0.5pt] (B) at (1,0) {$\phantom{\hat{ABC}^2}$};

	\draw[blue,line width=0.5pt] ($(A.north) - (0.2,0)$) .. controls +(up:1mm)  .. + (0.5, 1);
	\draw[blue,line width=0.5pt] ($(A.north) - (0,0)$) .. controls +(up:1mm)  .. + (1.2, 1);
	\draw[blue,line width=0.5pt] ($(A.north) + (0.2,0)$) .. controls +(up:1mm)  .. + (2, 1);
	
	\draw[blue,line width=0.5pt] ($(A.south) - (0.2,0)$) -- +(0, -1);
	\draw[blue,line width=0.5pt] ($(A.south) - (0,0)$) .. controls +(down:1mm)  .. +(0.8, -1);
	\draw[blue,line width=0.5pt] ($(A.south) + (0.2,0)$) .. controls +(down:1mm)  .. +(1.8, -1);
	
	\draw[line width=5pt, white] ($(B.north) + (0.2,0)$) .. controls +(up:1mm)  ..  + (-0.5, 1);
	\draw[line width=5pt, white] ($(B.north) - (0,0)$) .. controls +(up:1mm)  .. + (-1.2, 1);
	\draw[line width=5pt, white] ($(B.north) - (0.2,0)$) .. controls +(up:1mm)  .. + (-2, 1);	
	\draw[red,line width=0.5pt] ($(B.north) + (0.2,0)$) .. controls +(up:1mm)  .. + (-0.5, 1);
	\draw[red,line width=0.5pt] ($(B.north) - (0,0)$) .. controls +(up:1mm)  .. + (-1.2, 1);
	\draw[red,line width=0.5pt] ($(B.north) - (0.2,0)$) .. controls +(up:1mm)  .. + (-2, 1);
	
	\draw[line width=5pt, white] ($(B.south) + (0.2,0)$) -- +(0, -1);
	\draw[line width=5pt, white] ($(B.south) - (0,0)$) .. controls +(down:1mm)  .. +(-0.8, -1);
	\draw[line width=5pt, white] ($(B.south) - (0.2,0)$) .. controls +(down:1mm)  .. +(-1.8, -1);
	\draw[red,line width=0.5pt] ($(B.south) + (0.2,0)$) -- +(0, -1);
	\draw[red,line width=0.5pt] ($(B.south) - (0,0)$) .. controls +(down:1mm)  .. +(-0.8, -1);
	\draw[red,line width=0.5pt] ($(B.south) - (0.2,0)$) .. controls +(down:1mm)  .. +(-1.8, -1);
\end{tikzpicture}\ .
\end{equation*}
One can verify this indeed yields a perfect tangle in this case. We don't have as yet a proof that the general procedure always yields a perfect tangle, however, we have no counterexamples. 

\emph{Conclusions and outlook.---}We have generalised the notion of a perfect tensor to categories other than quantum spin systems, particularly, certain modular tensor categories which are understood to model the kinematics of systems of anyons. Such perfect tangles and perfect morphisms are expected to give rise to microscopic models of the AdS/CFT correspondence whose quasiparticles exhibit anyonic statistics, i.e., generalisations of the Chern-Simons/Wess-Zumino-Witten correspondence. There are many open questions, including, how to prove that the horizontal construction always works, generalising our constructions to odd numbers of legs, and building perfect morphisms for MTCs other than cubic categories.

\acknowledgments{Inspiring conversations with Vaughan Jones, who contributed the braid heuristic argument, are gratefully acknowledged. Thanks also to Ramona Wolf and Deniz Stiegemann for helpful comments and suggestions. This work was supported by the the DFG through SFB 1227 (DQ-mat) and the RTG 1991.}

\bibliography{perfecttangles}

\widetext
\appendix
\section{Supplementary material}
\subsection{Category theory}
In order for this Letter be as self-contained as possible -- and to normalise notations and conventions -- we include here a short introduction to category theory.

In this work we only ever need small categories, hence our usage of \emph{sets} rather than \emph{classes} in the following definition.
\begin{definition}[Category]\index{Category} A \emph{category} $\mathbf{C}$ consists of a set $\mathsf{obj}(\mathbf{C})$ of \emph{objects}, and a set $\mathsf{mor}(\mathbf{C})$ of \emph{morphisms}, subject to the following axioms. To each morphism we assign two objects called the \emph{source} and the \emph{target}, respectively. If $f\in\mathsf{mor}(\catname{C})$ with source $A$ and target $B$, this is denoted $A\stackrel{f}{\to} B$ or $f:A\rightarrow B$.

The set of morphisms is required to be closed under \emph{composition}, denoted $\circ$: If $A\morphism{f}B$ and $B\morphism{g}C$, then there exists a unique morphism $A\morphism{m}C$ s.t.\ $m = g\circ f$. In the language of diagrams used in category theory this uniqueness assertion is commonly denoted by a dashed or dotted arrow in a commutative diagram:

\begin{center}
\begin{tikzcd}
A \ar[r,"f"] \ar[dashrightarrow, swap]{dr}{m}& B \ar[d,  "g"]\\
  & C
\end{tikzcd}
\end{center}
Composition of morphisms must be associative, i.e.\ $h \circ \left( g \circ f\right) = \left( h\circ g \right)\circ f$. This associativity requirement allows us to drop the parentheses and we may just write $h\circ g\circ f$. 

For each object $A$ there is a unique morphism $\id_A$, the identity, satisfying $\id_A\circ f = f$, $g\circ \id_A = g$, for each $f$ with target $A$ and each $g$ with source $A$. This morphism will occasionally also be written $1_A$.
\end{definition}

The set of all morphisms from an object $A$ to an object $B$ is denoted by $\mathbf{C}[A,B]$, or $\mathrm{Hom}_\mathbf{C}\left( A,B \right)$, where $\mathrm{Hom}$ stands for \emph{hom-set}. Various other notations are also used in the literature, in particular, we often refer to $\mathrm{Hom}_\mathbf{C}\left( A,B \right)$ as a \emph{morphism space}.

The concept of a category is easily understood by looking at a few examples. The easiest that comes to mind is the category $\mathbf{Set}$, where the objects are just regular old sets, and the morphisms are functions. The composition of morphisms is thus simply the composition of functions, and the identity morphism on each set is the identity function.  It is then elementary to check the category axioms.

Another oft-cited example is this: a group can be seen as a category with one object -- say, $\star$ --  where every morphism is invertible. The group operation is composition of morphisms, and \emph{invertibility} here means that for every morphism $g$ there exists a unique morphism $g\inv$ that is the two-sided inverse of $g$, i.e.\ $g\circ g\inv = \id_\star = g\inv\circ g$. This is also precisely the definition of \emph{isomorphism}: A morphism which has a two-sided inverse.

Two special kinds of morphism need to be mentioned.  Firstly, $f$ is called \emph{an epimorphism} (or \emph{epic)} iff
\begin{align*}
g_1 \circ f = g_2 \circ f \implies g_1 = g_2,
\end{align*}
and $f$ is called \emph{a monomorphism} (or \emph{monic}), often denoted $X\hookrightarrow Y$, iff
\begin{align*}
f\circ g_1 = f\circ g_2 \implies g_1 = g_2.
\end{align*}
These in a sense generalize the notions of \emph{surjective} and \emph{injective}, respectively. An isomorphism is in particular always both monic and epic.

In the two examples above morphisms were still functions. But this need not be the case. Suppose, for example, that $S$ is a poset and let $x,y,z\in S$. We interpret $S$ as a category and define hom-sets by $S[x,y]\equiv\{(x,y) \}$ if $x \leq y$, $\emptyset$ otherwise. Then every hom-set has at most one element and composition $(y,z)\circ(x,y) = (x,z)$ is correctly defined.

From categories $\catname{C}, \catname{D}$ multiple new categories can be built. The \emph{opposite category} $\catname{C}^{op}$, for example, has the same objects but for each morphism source and target are exchanged. Or one may define a \emph{product category} $\catname{C}\times\catname{D}$, where the objects are the elements of $\mathsf{obj}(\catname{C})\times\mathsf{obj}(\catname{D})$, and morphisms are the elements of $\mathsf{mor}(\catname{C})\times\mathsf{mor}(\catname{D})$. Composition is inherited from the original categories, i.e.\ $(f_2,g_2)\circ(f_1,g_1) = (f_2\circ f_1, g_2\circ g_1)$.

A special property is \emph{linearity}: a category $\catname{C}$ is called \emph{$\mathbb{F}$-linear} if for all objects $X,Y$ the hom-set $\catname{C}[X,Y]$ is a (finite-dimensional) vector space over the field $\mathbb{F}$.

Some categories possess \index{Object!initial}\index{Object!terminal}\emph{initial} (\emph{terminal}) objects. An initial (terminal) object $A$ in $\mathbf{C}$ is one where for all objects $B$ the hom-set $\mathbf{C}[A,B]$ ($\mathbf{C}[B,A]$) consists of only one element. An object that is both initial and terminal is referred to as a \emph{zero object}. All of these are defined up to isomorphism. As an example, $\catname{Set}$ possesses no zero object, but the empty set serves as initial object, and every singleton is terminal.

If a category does have a zero object $0$, then an object $X$ is called \emph{simple} if the only objects that have monomorphisms into $X$ are $0$ and $X$ itself.

An elementary concept is that of \emph{functors}, which may be viewed as morphisms between categories.

\begin{definition}[Functor]{\index{Functor}}
Let $\catname{C},\catname{D}$ be categories. A \emph{functor} $F:\catname{C}\rightarrow \catname{D}$ assigns to each object (morphism) in $\catname{C}$ an object (morphism) in $\catname{D}$ s.t.\ $F\id_A = \id_{FA}$, and either
\begin{itemize}
\item[•]$FA\morphism{Ff} FB$ or
\item[•]$FB\morphism{Ff} FA$,
\end{itemize}
 for $f\in\catname{C}[A,B]$.
In the former case $F$ is called \emph{covariant}, in the latter \emph{contravariant}. A covariant functor respects composition,
\begin{align*}
F(g\circ f) = Fg\circ Ff,
\end{align*}
while a contravariant one reverses it,
\begin{align*}
F(g\circ f) = Ff\circ Fg.
\end{align*}
\end{definition}
Functors are ubiquitous. One extremely important example is the so-called \emph{hom-functor}, which exists in both flavors, co- and contravariant.

\begin{example}[Hom-functor]
Let $\catname{C}$ be any small category, and fix an object $X$. The covariant hom-functor $\catname{C}[X,-]:\catname{C}\rightarrow \catname{Set}$ sends each object $A$ to the hom-set $\catname{C}[X,A]$, and each morphism $f\in\catname{C}[A,B]$ gets sent to the function
\begin{align*}
\catname{C}[X,f]:\catname{C}[X,A] &\rightarrow \catname{C}[X,B],\\
l &\mapsto f\circ l
\end{align*}
The associativity of function composition then shows the covariance.

One analogously defines the contravariant hom-functor $\catname{C}[-,X]$.
\end{example}
Of course there are also easier examples. The trivial functor from any category to the category with one object and one morphism, operating in an obvious way; or the identity functor $\mathbf{1}$, an endofunctor (a functor from a category to itself), sending each object and each morphism to itself.

Observing the existence of \emph{natural transformations} was crucial in leading Eilenberg and MacLane \cite{eilenberg_general_1945} to starting this branch of mathematics. These entities at the very heart of category theory can be regarded as morphisms between functors. The following definition makes this precise.

\begin{definition}[Natural Transformation]
Let $F,G:\catname{C}\rightarrow\catname{D}$ be functors. A \emph{natural transformation} $\eta$ from $F$ to $G$, written $\eta:F\Rightarrow G$, is a family of morphisms $(\eta_X)_{X\in \catname{C}}$, $FX\morphism{\eta_X} GX$, indexed by objects of $\catname{C}$. The morphism $\eta_X$ is called the \emph{component at $X$}, and the components are required to make the following diagram commute, for all $f\in \catname{C}[X,Y]$.

\begin{figure}[!htp]\centering
\begin{tikzcd}
FX \ar[r,"Ff"] \ar[d,swap,"\eta_X"] & FY \ar[d, "\eta_Y"]\\
GX \ar[r, "Gf"] & GY
\end{tikzcd}
\end{figure}
In words: $Gf\circ\eta_X = \eta_Y \circ Ff$. In this way a natural transformation relates two functors. If all $\eta_X$ are isomorphisms, $\eta$ is called a \emph{natural isomorphism}.
\end{definition}
Composition of natural transformation is defined component-wise, and the collection of all functors from $\catname{C}$ to $\catname{D}$, with natural transformations as morphisms, then forms the \emph{functor category} $\catname{D}^\catname{C}$ or $[\catname{C},\catname{D}]$.

Now that we have all the basic gadgetry at our disposal we can introduce a categorical version of the \emph{tensor product}: a \emph{monoidal structure}. All categories in this Letter possess one of those.

\begin{definition}[Monoidal Category]
A category $\catname{C}$ equipped with a \emph{monoidal structure} is called a \emph{monoidal category}. The monoidal structure consists of the \emph{monoidal unit} $\mathbf{I}$, a functor $\otimes:\catname{C}\times\catname{C}\rightarrow\catname{C}$, usually written as an infix, and three natural isomorphisms satisfying certain so-called \emph{coherence conditions}. The isomorphisms are
\begin{itemize}
\item The \emph{associator} $\alpha: \otimes \circ \left(\otimes\times \mathbf{1}\right) \Rightarrow \otimes \circ \left(\mathbf{1}\times \otimes\right)$ with components $\alpha_{A,B,C}:(A\otimes B)\otimes C\rightarrow A\otimes\left( B\otimes C \right)$.
\item The \emph{right unitor} $\rho: -\otimes\mathbf{I}\Rightarrow \mathbf{1}$ with components $\rho_A:A\otimes \mathbf{I}\rightarrow A$.
\item The \emph{left unitor} $\lambda: \mathbf{I}\otimes - \Rightarrow \mathbf{1}$ with components $\lambda_A: \mathbf{I}\otimes A\rightarrow A$.
\end{itemize}
The coherence conditions are usually given in the form of diagrams. The \emph{associativity coherence}, also known as the \emph{pentagon axiom}, is:
\begin{center}
\tikzsetnextfilename{pentagon_axiom}
\begin{tikzpicture}[commutative diagrams/every diagram]
\node (P0) at (90:2.8cm) {$\left( \left( A\otimes B \right) \otimes C \right) \otimes D $};
\node (P1) at (90+1*72:2.5cm) {$\left( A\otimes \left( B\otimes C \right) \right)\otimes D$};
\node (P2) at (90+2*72:2.5cm) {\makebox[7ex][r]{$A\otimes\left(\left( B\otimes C \right) \otimes D \right)$}};
\node (P3) at (90+3*72:2.5cm) {\makebox[5ex][l]{$A\otimes\left( B\otimes \left( C\otimes D \right) \right)$}};
\node (P4) at (90+4*72:2.5cm) {$\left( A\otimes B \right)\otimes\left( C\otimes D \right)$};

\path[commutative diagrams/.cd, every arrow, every label]
	(P0) edge node[swap] {$\alpha_{A, B,C}\otimes 1_D$} (P1)
	(P1) edge node[swap] {$\alpha_{A,B\otimes C, D}$} (P2)
	(P2) edge node {$1_A\otimes \alpha_{B,C,D}$} (P3)
	(P0) edge node {$\alpha_{A\otimes B, C, D}$} (P4)
	(P4) edge node {$\alpha_{A,B,C\otimes D}$} (P3);
\end{tikzpicture}
\end{center}
The \emph{triangle axiom} is
\begin{center}
	\begin{tikzcd}
		\left( A\otimes \mathbf{I} \right)\otimes B \ar[swap, ddr, "\rho_A\otimes 1_B"] \ar[r, "\alpha_{A,\mathbf{I},B}"] & A\otimes\left( \mathbf{I}\otimes B \right)
		\ar[dd, "1_A\otimes \lambda_B"]		
		\\ & \\
		 & A\otimes B
	\end{tikzcd}
\end{center}
\end{definition}

The two main axioms in this definition give rise to MacLane's \emph{coherence theorem} for monoidal categories \cite{mac_lane_categories_1998}, which tells us that in a monoidal category every diagram consisting of only associators and unitors commutes. In turn we may drop the parentheses in monoidal products of objects, since all ways of bracketing are isomorphic \cite{etingof_tensor_2015}. 

We will also only be interested in strict monoidal categories, which means that the associator and both unitors  are taken to be the identity natural transformation, i.e.\ all components are identity morphisms.

Monoidal categories are also called \emph{tensor categories}, a name coming from the prototypical example: $\catname{Vect}_\mathbb{F}$, the category of vector spaces over $\mathbb{F}$, where the monoidal product is just your usual tensor product, and the field itself serves as the monoidal unit. The morphisms in this category are linear maps between vector spaces, so $\catname{Vect}_\mathbb{F}$ is also an example of a linear category.

 Given a monoidal category, we may define \emph{dual objects}, whence a whole variety of new terminology emerges.

\begin{definition}\index{Object!dual}\index{Category!rigid}\index{Category!pivotal}
\begin{itemize}
\item[{(i)}] A monoidal category is called \emph{symmetric} if for all objects $X,Y$ there exists an isomorphism $s_{XY}:X\otimes Y \rightarrow Y\otimes X$, natural in $X,Y$, such that
\[s_{YX}\inv = s_{XY}^{\phantom{-1}}\]
and such that an obvious associativity coherence diagram between $(X\otimes Y)\otimes Z$ and $Y\otimes (Z\otimes X)$ commutes \footnote{The parentheses were retained here to clarify how the associator would come into play if the category were not strict.}.
\item[{(ii)}] Let $\catname{C}$ be a monoidal category. Let $X\in\mathsf{obj}(\catname{C})$. If there exists an object $X^*$ together with two morphisms
\begin{align*}
\eta:\mathbf{1} \rightarrow X\otimes X^*
\end{align*}
and 
\begin{align*}
\epsilon:X^*\otimes X \rightarrow \mathbf{1},
\end{align*}
such that both diagrams in Figure \ref{fig:ev_coev_triangle} commute, then we say that $X$ has a \emph{left dual}, namely $X^*$. $\epsilon$ is called the \emph{evaluation}, while $\eta$ is the \emph{coevaluation}.
\begin{figure}[!htp]\centering
\begin{tikzcd}
X \ar[dr,swap,"1_X"]  \ar[r, "\eta\otimes 1_X"] & X\otimes X^*\otimes X  \ar[d,"1_X\otimes \epsilon"]\\
& X
\end{tikzcd}~~
and~~
\begin{tikzcd}
X^* \ar[dr,swap,"1_{X^*}"]  \ar[r, "1_{X^*}\otimes \eta"] &X^*\otimes X\otimes X^*  \ar[d,"\epsilon\otimes1_X"]\\
& X^*
\end{tikzcd}
\caption{The axioms for the evaluation and the coevalution.}\label{fig:ev_coev_triangle}
\end{figure}

Right duals ${}^*X$ are defined similarly (note that there are different conventions in use regarding which dual is \emph{right} and which is \emph{left}). Using string diagrams (read from bottom to top) the evaluation and the coevaluation can be drawn as caps and cups, respectively. An object is \emph{self-dual} if $X\cong X^*$.

\item[{(iii)}] A monoidal category is called \emph{rigid} if every object has duals on both sides. Taking left (right) duals then becomes a contravariant endofunctor, with \emph{dualization} of morphism given by
\begin{align*}
&\left( X\morphism{f}Y \right)^*
=\, Y^*\morphism{f^*}X^*
\\
&~~\equiv~  Y^* \morphism{ 1_{Y^*} \otimes \eta_X} Y^* \otimes X\otimes X^*  \morphism{ 1_{Y^*}\otimes f\otimes 1_{X^*}} Y^*\otimes Y \otimes X^* \morphism{\epsilon_Y \otimes 1_{X^*}} X^*.
\end{align*}
In the language of string diagrams this is just rotating the morphism by $180^\circ$ \cite{wang_topological_2010}.
\item[{(iv)}] A rigid category is called \emph{pivotal} if for each object $X$ we get an isomorphism $\phi_X:X \rightarrow (X^*)^*$ natural in $X$ and commuting with $\otimes$ in the obvious sense. The morphisms $\phi$ are called the \emph{pivotal structure}.
\end{itemize}
\end{definition}

If an object $X$ has two duals $X^*,$ $X^\vee$ then it is easy to construct an isomorphism $X^*\cong X^\vee$ using only the evalutaions and coevaluations. Categories with duals admit a nice graphical calculus, called \emph{string diagrams}, rigorously proved by \cite{joyal_geometry_1991} first, which we are only going to graze very lightly. For a full survey, see either the original paper (which is extremely technical), or the excellent paper \cite{selinger_survey_2010}.

In essence, a string diagram depicts a morphism $f:X\rightarrow Y$ as a box labelled $f$ with two strings attached, one labelled $X$ and one labelled $Y$, drawn `from bottom to top'.
Tensor product is juxtaposition of both strings and morphisms.  The monoidal unit is not depicted, and the identity morphism on $X$ is just the string labelled $X$. Finally, if $X$ has a (left) dual, then the evaluation and coevaluation maps are drawn as caps and cups, respectively. That is
\begin{align*}
\epsilon_X \doteq \,
\begin{tikzpicture}[scale=0.7,baseline=1mm]
	\draw (0,0)node[left]{$X^*$} arc[start angle=180, end angle=0, radius=1cm] node[right]{$X$};
\end{tikzpicture}\, , \qquad
\eta_X \doteq \,
\begin{tikzpicture}[scale=0.7,baseline=-2mm]
	\draw (0,0)node[left]{$X$} arc[start angle=180, end angle=360, radius=1cm] node[right]{$X^*$};
\end{tikzpicture}\,.
\end{align*}
This makes clear why we also speak of \emph{taking duals} as \emph{rotation by $180^\circ$}.

In a category with pivotal structure $\phi$, which we from now on assume to be trivial, we can then define a notion of \emph{left} and \emph{right trace} for a morphism $f:X\rightarrow X$ by setting
\tikzexternaldisable
\begin{align*}
\mathrm{tr}_l(f)&\equiv\,
\begin{tikzpicture}[scale=0.5, baseline=-1mm]
	\node[draw] (f)  at (0,0) {$f$};
	\draw (f.north) arc[start angle=0, end angle=180, radius=5mm] -- 
		 ($(f.south) + (-1,0)$) arc[start angle=180, end angle=360, radius=5mm];
\end{tikzpicture}\\
&= 1 
\morphism{\eta_{X^*}} X^*\otimes X^{**} = X^*\otimes X
\morphism{1_{X^*}\otimes f} X^*\otimes X
\morphism{\epsilon_X} 1
\end{align*}
and
\begin{align*}
\mathrm{tr}_r(f)&\equiv\,
\begin{tikzpicture}[scale=0.5, baseline=-1mm]
	\node[draw] (f)  at (0,0) {$f$};
	\draw (f.north) arc[start angle=180, end angle=0, radius=5mm] -- 
		 ($(f.south) + (1,0)$) arc[start angle=0, end angle=-180, radius=5mm];
\end{tikzpicture}\,.
\end{align*}

In general there is no reason for the left and the right trace  to coincide, but if $\mathrm{tr}_r = \mathrm{tr}_l$, then the category is called \emph{spherical}, the name coming from the fact that one trace diagram can be transformed into the other by diffeomorphism on the 2-sphere. All categories we encounter here are, in fact, spherical.

\subsection{Planar algebras}
Although we mostly work with Temperley-Lieb (TL) algebras, we introduce here the notion of a \emph{planar algebra} (PA) \cite{jones_planar_1999}, introduced by Jones as a sweeping generalisation of subfactor theory. One of the reasons for this is that the collection of TL algebras can be equipped with a planar structure, yielding the Temperley-Lieb planar algebra, and that the class of trivalent categories have a deep connection to planar algebras as well. The other reason is that our main definitions and some theorems work in planar algebras in general, that is, independent of the nature of the things we ``put into the boxes''.

Planar algebras are designed to graphically capture multilinearity, not unlike the diagrammatic language of (rigid) monoidal categories \cite{joyal_geometry_1991} or other pictorial computation aids, such as found in the theory of tensor networks, or their antecedant Penrose diagrams. We consider examples almost exclusively in Temperley-Lieb algebras. This is because the TL \emph{planar} algebra is initial in the category of subfactor planar algebras \cite{peters_planar_2010}.

\subsubsection{Planar tangles}
Here we introduce the basic notion of a \emph{planar tangle}, and give a few examples. We then describe the operadic nature of these objects. The planar tangle definition given here is somewhat intuitionistic, and the reader desiring more rigour is invited to follow the original references \cite{jones_planar_1999}.

Our first description of \emph{shaded} planar tangles is in terms of disks, which eventually become rectangles, only to once again become circles when we briefly discuss \emph{unshaded} planar tangles. We advise the reader to look at \figref{fig:Tangle Examples1}, and to maybe draw their own pictures.

\begin{definition}[Shaded Planar Tangle]\index{Planar Tangle!shaded}\label{def:shaded planar tangle}
Define the set \[\mathfrak{Col} \equiv \mathbb{N}_0\times \{+,-\}\] of \emph{colors}, and pick an element $(n_0,\epsilon_0)\in\mathfrak{Col}$. A \emph{shaded $(n_0, \epsilon_0)$-tangle} $T$ or \emph{shaded tangle} $T$ \emph{with color $(n_0, \epsilon_0)$} is then a disk $D_0$ together with the following extra data.
\begin{itemize}
\item[{(1)}] In the interior there is a (possibly empty) collection $\mathcal{D}_T$ of disjoint disks $D_i$, $1\leq i \leq \lvert \mathcal{D}_T \rvert$, each with \emph{color} $(n_i, \epsilon_i)$.
\item[{(2)}] There are $2n_i$ distinct points on the boundary of each disk. Also, for each disk, one of these points is marked with a $\markedPoint$ and called the \emph{first} or \emph{marked point}.
\item[{(3)}] All distinct points are the endpoints of smooth non-intersecting curves in the interior of $D_0$, and there may also be closed curves not attached to the boundary of any disk. The strings have to be \emph{compatible} with the marked points of all disks in the following sense. Label the marked point of the disk $D_i$ with $\epsilon_i$, then label all other distinct points in a clockwise fashion alternating between $+$ and $-$. A string with one endpoint on an \emph{internal} disk and the other on the \emph{external} disk $D_0$ has to be attached at distinct points with the same sign. For the other two combinations, the signs have to differ.

The curves are called \emph{strings}, their complement in $D_0^\circ - \mathcal{D}_T$ is the set of \emph{regions}. 
\item[{(4)}] We then put a \emph{shading} on the regions like so: The region on the right of the marked point of the external disk is 
\begin{align*}
	\begin{cases}
		\text{shaded} & \text{if } \epsilon_0 = +,\\
		\text{not shaded} & \text{if } \epsilon_0= -.
	\end{cases}
\end{align*}
All other regions are then either shaded or not shaded such that no two adjacent regions have the same shading, i.e.\ we have a checkerboard shading. This works because of the compatibility condition in point {(3)}.

\end{itemize}

Finally, these objects are defined up to isotopy; that is, if you can continuously deform one tangle into another one then both are considered the same.
\end{definition}
Two enlightening examples can be found in \figref{fig:Tangle Examples1}.
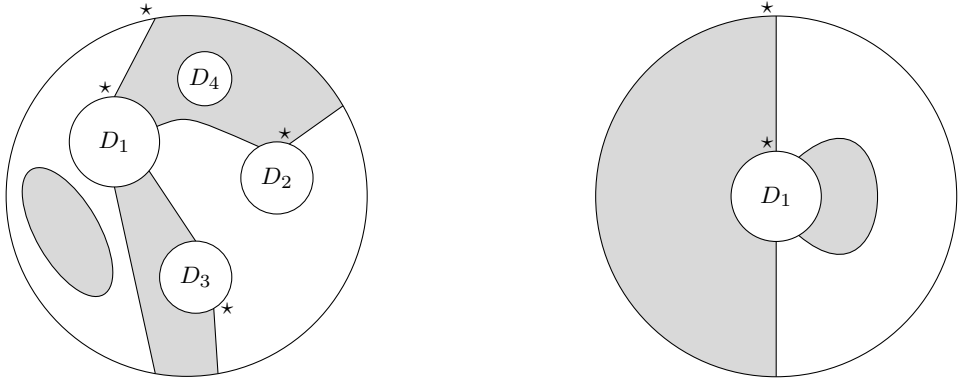
\begin{figure}[!htp]
	\centering
		\begin{minipage}[b]{.4\textwidth}
		\begin{tikzpicture}[scale=1.2]
			\coordinate (centerO) at (0,0);
			\coordinate (centerA) at (-0.8,0.6);
			\coordinate (centerB) at (1,0.2);
			\coordinate (centerC) at (0.1,-0.9);
			\coordinate (centerD) at (0.2,1.3);
			\def\radiusO{2cm};
			\def\radiusA{0.5cm};
			\def\radiusB{0.4cm};
			\def\radiusC{0.4cm};
			\def\radiusD{0.3cm};
			\def\helligkeit{30};
			
			\begin{scope}
				\path[clip] ($(centerO) + (100:\radiusO)$) -- ($ (centerA) +(90:\radiusA) $) --
					($ (centerA) +(90:\radiusA) $) arc [start angle=90 , end angle=20 , radius=\radiusA] --
					($(centerA) + (20:\radiusA)$) .. controls (0,0.9).. ($ (centerB) +(120:\radiusB) $) --
					($ (centerB) +(120:\radiusB) $) arc [start angle=120 , end angle=70 , radius=\radiusB] --
					($ (centerB) +(70:\radiusB) $) -- ($(centerO) + (30:\radiusO)$) --
					($(centerO) + (30:\radiusO)$) arc [start angle=30 , end angle=100 , radius=\radiusO];				
				\fill[color = gray!\helligkeit, opacity=0.1]  (centerO) circle [radius=\radiusO];
			\end{scope}
			
			\begin{scope}			
				\path[clip] ($(centerA) + (-40:\radiusA)$) -- 
					($ (centerC) +(90:\radiusC) $) arc [start angle=90 , end angle=300 , radius=\radiusC] --
					($ (centerC) +(300:\radiusC) $) -- 
					($(centerO) + (-80:\radiusO)$) arc[start angle=-80, end angle=-100, radius=\radiusO] --
					($ (centerA) +(270:\radiusA) $) arc[start angle=270, end angle=320, radius=\radiusA];
				\fill[color = gray!\helligkeit, opacity=0.1]  (centerO) circle [radius=\radiusO];
			\end{scope}
					
			\draw (centerO) circle [radius=\radiusO];
			\draw[fill=white] (centerA) circle [radius=\radiusA];
			\draw[fill=white] (centerB) circle [radius=\radiusB];
			\draw[fill=white] (centerC) circle [radius=\radiusC];
			\draw[fill=white] (centerD) circle [radius=\radiusD];
			
			\draw (centerA) node {$D_1$};
			\draw (centerB) node {$D_2$};
			\draw (centerC) node {$D_3$};
			\draw (centerD) node {$D_4$};	
	
			\draw 
				($(centerO) + (100:\radiusO)$) -- ($ (centerA) +(90:\radiusA) $);
			\draw 
				($(centerO) + (260:\radiusO)$) -- ($ (centerA) +(270:\radiusA) $);
			\draw 
				($(centerO) + (30:\radiusO)$) -- ($ (centerB) +(70:\radiusB) $);
			\draw 
				($(centerO) + (-80:\radiusO)$) -- ($ (centerC) +(300:\radiusC) $);
			
			\draw 
				($(centerA) + (20:\radiusA)$) .. controls (0,0.9).. ($ (centerB) +(120:\radiusB) $);
			\draw 
				($(centerA) + (-40:\radiusA)$) -- ($ (centerC) +(90:\radiusC) $);
			
			\draw[fill=gray!\helligkeit, opacity=0.1] (-1.32,-0.4) circle [x radius = 0.35cm, y radius = 0.8cm, rotate = 30];
			
			\node (markedPointO) at ($(centerO) + (100:\radiusO) + (-0.1,0.1)$) {$\markedPoint$};
			\node (markedPointA) at ($(centerA) + (90:\radiusA) + (-0.1,0.1)$) {$\markedPoint$};
			\node (markedPointB) at ($(centerB) + (70:\radiusB) + (-0.05,0.12)$) {$\markedPoint$};
			\node (markedPointC) at ($(centerC) + (-60:\radiusC) + (0.15,0)$) {$\markedPoint$};
		\end{tikzpicture}
		\end{minipage}\qquad
		\begin{minipage}[b]{.4\textwidth}

		\begin{tikzpicture}[scale=1.2]
			\coordinate (centerO) at (0,0);
			\coordinate (centerA) at (0,0);
			\def\radiusO{2cm};
			\def\radiusA{0.5cm};
			\def\helligkeit{30};
			
			\begin{scope}
				\path[clip]
				($ (centerA) +(90:\radiusA) $) arc[start angle=90, end angle=270, radius=\radiusA] --
				($(centerO) + (-90:\radiusO)$) arc[start angle=270, end angle=90, radius=\radiusO];
				\fill[color = gray!\helligkeit, opacity=0.1]  (centerO) circle [radius=\radiusO];
			\end{scope}
			\begin{scope}
				\path[clip] 
					($(centerA) + (60:\radiusA)$) .. controls ($(centerA) + (45:\radiusO)$) and ($(centerA) + (-45:\radiusO)$) .. ($ (centerA) +(-60:\radiusA) $) arc[start angle=300, end angle=60, radius=\radiusA];
				\fill[color = gray!\helligkeit, opacity=0.1]  (centerO) circle [radius=\radiusO];				
			\end{scope}
					
			\draw (centerO) circle [radius=\radiusO];
			\draw[fill=white] (centerA) circle [radius=\radiusA];
			
			\draw (centerA) node {$D_1$};	
	
			\draw 
				($(centerO) + (90:\radiusO)$) -- ($ (centerA) +(90:\radiusA) $);
			\draw 
				($(centerO) + (-90:\radiusO)$) -- ($ (centerA) +(-90:\radiusA) $);		
			\draw 
				($(centerA) + (60:\radiusA)$) .. controls ($(centerA) + (45:\radiusO)$) and ($(centerA) + (-45:\radiusO)$) .. ($ (centerA) +(-60:\radiusA) $);		
			
			\node (markedPointO) at ($(centerO) + (90:\radiusO) + (-0.1,0.1)$) {$\markedPoint$};
			\node (markedPointA) at ($(centerA) + (90:\radiusA) + (-0.1,0.1)$) {$\markedPoint$};
		\end{tikzpicture}
		\end{minipage}
		\caption{Examples of planar tangles. In the left figure we see a $(4,+)$-tangle whose internal disks have colors $(4,+),(2,-),(2,+),$ and $(0,+)$, respectively. The tangle in figure to the right is $(2,-)$ with a single internal $(4,-)$ disk.}\label{fig:Tangle Examples1}
\end{figure}

Note that the shading is really only a way of encoding the colors of the disk so that they are easier to recognize. It is not necessary to always draw it. The shading is clear as long as one keeps track of the colors and the marked points of each disk.

Given an $(n,\epsilon)$-tangle $T$ and an $(\tilde{n},\tilde{\epsilon})$-tangle $S$ such that for some internal disk $D_i$ of $T$ we have $(n_i, \epsilon_i)=(\tilde{n},\tilde{\epsilon})$ we can in a natural way define a new $(n,\epsilon)$-tangle $T\circ_i S$. Namely: plug $S$ into $D_i$ such that the marked and all other distinct points of $S$ and $D_i$ coincide, delete the boundary of $S$, and smooth out the strings. Then 
\begin{align*}
\mathcal{D}_{T\circ_i S} = \left(\mathcal{D}_T- \{D_i\}\right) \cup \mathcal{D}_S,
\end{align*}
and we relabel the internal disks in the obvious way. If there is only one internal disk in which we can insert then we will also drop the index and write $\circ$ instead of $\circ_1$.
We can, for example, $\circ_2$-compose the first tangle in \figref{fig:Tangle Examples1} with the second one and obtain
\begin{center}
\begin{tikzpicture}[scale=1]
	\coordinate (centerO) at (0,0);
	\coordinate (centerA) at (-0.8,0.6);
	\coordinate (centerB) at (1,0.2);
	\coordinate (centerC) at (0.1,-0.9);
	\coordinate (centerD) at (0.2,1.3);
	\def\radiusO{2cm};
	\def\radiusA{0.5cm};
	\def\radiusB{0.4cm};
	\def\radiusC{0.4cm};
	\def\radiusD{0.3cm};
	\def\helligkeit{30};
	
	\begin{scope}
		\path[clip] ($(centerO) + (100:\radiusO)$) -- ($ (centerA) +(90:\radiusA) $) --
			($ (centerA) +(90:\radiusA) $) arc [start angle=90 , end angle=20 , radius=\radiusA] --
			($(centerA) + (20:\radiusA)$) .. controls (0,0.9).. ($ (centerB) +(120:\radiusB) $) --
			($ (centerB) +(120:\radiusB) $) arc [start angle=120 , end angle=70 , radius=\radiusB] --
			($ (centerB) +(70:\radiusB) $) -- ($(centerO) + (30:\radiusO)$) --
			($(centerO) + (30:\radiusO)$) arc [start angle=30 , end angle=100 , radius=\radiusO];				
		\fill[color = gray!\helligkeit, opacity=0.1]  (centerO) circle [radius=\radiusO];
	\end{scope}
	
	\begin{scope}
		\path[clip]
			($(centerB) + (240:\radiusB)$) 
			.. controls ($(centerB) + (240:\radiusB) + (240:0.5cm)$) and ($(centerB) + (300:\radiusB)+(300:0.5cm)$) .. 
			($ (centerB) +(300:\radiusB) $) arc[start angle=-60, end angle=-120, radius=\radiusB];
		\fill[color = gray!\helligkeit, opacity=0.1]  (centerO) circle [radius=\radiusO];
	\end{scope}
	
	\begin{scope}			
		\path[clip] ($(centerA) + (-40:\radiusA)$) -- 
			($ (centerC) +(90:\radiusC) $) arc [start angle=90 , end angle=300 , radius=\radiusC] --
			($ (centerC) +(300:\radiusC) $) -- 
			($(centerO) + (-80:\radiusO)$) arc[start angle=-80, end angle=-100, radius=\radiusO] --
			($ (centerA) +(270:\radiusA) $) arc[start angle=270, end angle=320, radius=\radiusA];
		\fill[color = gray!\helligkeit, opacity=0.1]  (centerO) circle [radius=\radiusO];
	\end{scope}
			
	\draw (centerO) circle [radius=\radiusO];
	\draw[fill=white] (centerA) circle [radius=\radiusA];
	\draw[fill=white] (centerB) circle [radius=\radiusB];
	\draw[fill=white] (centerC) circle [radius=\radiusC];
	\draw[fill=white] (centerD) circle [radius=\radiusD];
	
	\draw (centerA) node {$D_1$};
	\draw (centerB) node {$D_2$};
	\draw (centerC) node {$D_3$};
	\draw (centerD) node {$D_4$};	

	\draw 
		($(centerO) + (100:\radiusO)$) -- ($ (centerA) +(90:\radiusA) $);
	\draw 
		($(centerO) + (260:\radiusO)$) -- ($ (centerA) +(270:\radiusA) $);
	\draw 
		($(centerO) + (30:\radiusO)$) -- ($ (centerB) +(70:\radiusB) $);
	\draw 
		($(centerO) + (-80:\radiusO)$) -- ($ (centerC) +(300:\radiusC) $);
	
	\draw 
		($(centerA) + (20:\radiusA)$) .. controls (0,0.9).. ($ (centerB) +(120:\radiusB) $);
	\draw 
		($(centerA) + (-40:\radiusA)$) -- ($ (centerC) +(90:\radiusC) $);
	\draw
		($(centerB) + (240:\radiusB)$) .. controls ($(centerB) + (240:\radiusB) + (240:0.5cm)$) and ($(centerB) + (300:\radiusB)+(300:0.5cm)$) .. ($ (centerB) +(300:\radiusB) $);
	
	\draw[fill=gray!\helligkeit, opacity=0.1] (-1.32,-0.4) circle [x radius = 0.35cm, y radius = 0.8cm, rotate = 30];
	
	\node (markedPointO) at ($(centerO) + (100:\radiusO) + (-0.1,0.1)$) {$\markedPoint$};
	\node (markedPointA) at ($(centerA) + (90:\radiusA) + (-0.1,0.1)$) {$\markedPoint$};
	\node (markedPointB) at ($(centerB) + (70:\radiusB) + (-0.05,0.12)$) {$\markedPoint$};
	\node (markedPointC) at ($(centerC) + (-60:\radiusC) + (0.15,0)$) {$\markedPoint$};
\end{tikzpicture}
\end{center}

This operation of inserting planar tangles into like-colored internal discs of another tangle is called \emph{composition}, and with it we have pretty much already defined the \emph{colored planar operad} $\mathfrak{P}$, see e.g.\ \cite{loday_algebraic_2012}. (If you know about operads then you may see the planar operad as a more ``pedantic'' version of the \emph{little disks} operad.) What this really means is that, for all intents and purposes, we have a collection of things -- in this case, all planar tangles -- and on this collection we define a notion of \emph{composition} that is associative and unital, where unitality of composition  means that there are some elements $\id_{(n_i,\epsilon_i)}$ in our collection such that $T\circ_i \id_{(n_i,\epsilon_i)} = T$, whenever this composition is possible.

It is immediately clear what these units are in our case: For fixed $(n,\epsilon)$ they are the unique $(n,\epsilon)$-tangle with exactly one internal $(n,\epsilon)$-disc, and strings only connecting the internal and the external disks such that both marked points are endpoints of the same string. For example, the following is $\id_{(2,-)}$:
\begin{align*}
\id_{(2,-)} =\,
\begin{tikzpicture}[scale=0.8, baseline]
	\coordinate (O) at (0,0);
	\def\radiusO{2cm};
	\def\radiusA{0.5cm};
	\def\helligkeit{30};
%
	\begin{scope}
		\path[clip]
			($ (O) +(120:\radiusO) $) arc[start angle=120, end angle=240, radius=\radiusO] --
			($ (O) +(240:\radiusA) $) arc[start angle=240, end angle=120, radius=\radiusA] ;
		\fill[color = gray!\helligkeit, opacity=0.1]  (O) circle [radius=\radiusO];
	\end{scope}
	\begin{scope}
		\path[clip] 
			($ (O) +(60:\radiusO) $) arc[start angle=60, end angle=-60, radius=\radiusO] --
			($ (O) +(-60:\radiusA) $) arc[start angle=-60, end angle=60, radius=\radiusA] ;
		\fill[color = gray!\helligkeit, opacity=0.1]  (O) circle [radius=\radiusO];				
	\end{scope}
	\draw (O) circle [radius=\radiusO];
	\draw[fill=white] (O) circle [radius=\radiusA];
%
	\draw 
		($(O) + (120:\radiusO)$) -- ($ (O) + (120:\radiusA) $);
	\draw 
		($(O) + (60:\radiusO)$) -- ($ (O) + (60:\radiusA) $);		
	\draw 
		($(O) + (-120:\radiusO)$) -- ($ (O) + (-	120:\radiusA) $);
	\draw 
		($(O) + (-60:\radiusO)$) -- ($ (O) + (-60:\radiusA) $);		
%
	\node (markedPointO) at ($(O) + (120:\radiusO) + (-0.1,0.1)$) {$\markedPoint$};
	\node (markedPointA) at ($(O) + (120:\radiusA) + (-0.2,0.1)$) {$\markedPoint$};	
\end{tikzpicture}
\end{align*}
As the symbol already implies it also serves as the identity, since for any $(2,-)$-tangle $T$ we clearly have $\id_{(2,-)} \circ T = T$. 

For $x\in\mathfrak{Col}$ we define $\mathcal{T}_x$ to be the free complex vector space on the set of all planar $x$-tangles. These spaces are often called \emph{$x$-box spaces} \footnote{Strictly speaking this name only makes sense for shaded tangles.}, as explained in the following.
Note that $\mathcal{T}_{(n,+)}\cong \mathcal{T}_{(n,-)}$ -- every tangle can be defined with the opposite shading. 

We promised that there is also a description of tangles in terms of rectangles (or \emph{boxes}) instead of discs. This is sometimes easier, and it surely is easier to draw. To obtain it, simply replace all disks by rectangles (or blow up the disks into rectangles, if you will) so that half of the distinct points is along the top edge and the other half is along the bottom edge, and so that on left of the marked point is a corner of the rectangle.

Then we further impose that in this description the marked point be the left-most point on the top. This can always be done by choosing an appropriate representative of the isotopy class, and it is commonly referred to as the \emph{standard form} of the tangle. It allows us to draw pictures even simpler: the symbol $\markedPoint$ doesn't need to be put next to the marked points anymore, as we now know exactly where they are.

In terms of boxes it is now easier to talk about an involution that exists on planar tangles. This involution ${}^*$ yields the \emph{adjoint} $T^*$ of a tangle $T$ as follows.
\begin{itemize}
\item[{$\bullet$}] Take the tangle in standard form.
\item[$\bullet$] Flip it horizontally, i.e.\ along the horizontal axis.
\item[$\bullet$] Interpret the result as a tangle in standard form -- that is, the point that was the ``last'' point (when enumerating in a clockwise fashion) before the flip is now the marked point.
\item[$\bullet$] Note that this operation reversed the shading. But we want the adjoint to have the same color, so invert the shading.
\end{itemize}
That this is indeed involutive, i.e.\ ${}^{**}=\id$, is an easy exercise.

Before we go on, let us list some planar tangles that we will encounter a few times.

\subsubsection*{A few important tangles}
There are a few tangles that deserve to be mentioned explicitly. All of these tangles can be defined for any box space, so we will only give a few easy examples. Arguably the most important one, the \emph{multiplication tangle}, is given by stacking tangles on top of each other, na\"ively speaking. Its pictorial representation can be found after the definition of planar algebras.

The (second) most important one, which will be used throughout this Letter, is the \emph{rotation tangle} or \emph{one-click rotation} $\text{rot}:\mathcal{T}_{(n,\pm)}\rightarrow \mathcal{T}_{(n,\mp)}$. An example can be found in \figref{fig:rotation tangle}. Note that what makes it a ``rotation'' is that the marked point of the internal box is offset by one with respect to the marked point of the exterior box.

\begin{figure}[!htp]\centering
	\begin{tikzpicture}
		\node (text) at (-3.3,0) {$\text{rot}:\mathcal{T}_{(2,+)}\rightarrow\mathcal{T}_{(2,-)} \equiv$};
		\def\helligkeit{30};
		\begin{scope}
			\path[clip]
				(-0.2,0.5) arc[start angle=0, end angle=180, radius=3mm] -- ++(0,-1.7) -- ++(-0.2,0) -- ++(0,2.4) -- ++(1.2,0) -- ++(0,-0.7);
			\fill[color = gray!\helligkeit, opacity=0.1]  (-1,-1.2) rectangle (1,1.2);
		\end{scope}
		\begin{scope}
			\path[clip]
				(0.2,-0.5) arc[start angle=180, end angle=360, radius=3mm] -- ++(0, 1.7) -- ++(0.2,0) -- ++(0,-2.4) -- ++(-1.2,0) -- ++(0,0.7);
			\fill[color = gray!\helligkeit, opacity=0.1]  (-1,-1.2) rectangle (1,1.2);
		\end{scope}
		
		\draw[fill= white] (-0.5,-0.5) rectangle (0.5,0.5);
		\node (O) at (0,0) {$D_1$};
		\draw (-1,-1.2) rectangle (1,1.2);
		\draw (-0.2,0.5) arc[start angle=0, end angle=180, radius=3mm] -- ++(0,-1.7);
		\draw (0.2,-0.5) arc[start angle=180, end angle=360, radius=3mm] -- ++(0,1.7);
		\draw (0.2, 0.5) -- ++(0,0.7);
		\draw (-0.2, -0.5) -- ++(0,-0.7);
	\end{tikzpicture}
	\caption[The rotation tangle]{The shaded rotation tangle for $(2,+)$-tangles. All boxes are in standard form. The shading is supplied for clarity.}
	\label{fig:rotation tangle}
\end{figure}
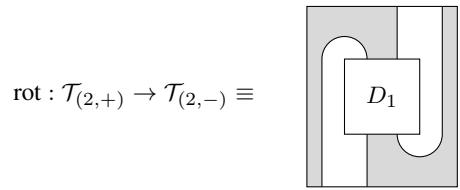
Now the notation will be getting a bit messy, but we promise that the reader only has to endure this for a short while. We can also define the \emph{inclusion tangle} $\text{inc}_{k,\pm}^{k+1}$ (see \figref{fig:InclusionTangle}), which takes a $(k,\pm)$-tangle $S$ and maps it to the $(k+1,\pm)$-tangle $\text{inc}_{k,\pm}^{k+1}\circ_1 S$. It of course also makes sense to then define the general symbol $\text{inc}_{k,\pm}^{l}\equiv \text{inc}_{l-1,\pm}^{l}\circ \ldots \circ \text{inc}_{k,\pm}^{k+1}$ for $l>k$, this is just including the inclusion.

There also exists a sort of converse for the inclusion, commonly called the \emph{conditional expectations tangle} $\mathcal{E}_{k+1,\pm}^{k}$ as seen in \figref{fig:ConditionalExpectationTangle}, taking $(k+1,\pm)$-tangles to $k$-tangles with the same shading for the marked point. For this tangle an obvious generalization $\mathcal{E}_{k+1,\pm}^{l}$, for $l\leq k$, is readily defined.

\begin{figure}[!htp]\centering
	\begin{minipage}[b]{.4\textwidth}\centering
		\begin{tikzpicture}[scale=0.8, every node/.style={scale=0.8}]
			\coordinate (centerO) at (0,0);
			\def\radiusO {2cm};
			\def\radiusA {.7cm};
		
			\begin{scope}		
				\path[clip] ($(centerO)+(90:\radiusO)$) -- 
					($(centerO)+(90:\radiusA)$)  arc [start angle=90, end angle=-90, radius=\radiusA] --
					($(centerO) + (-90:\radiusO)$) --
					($(centerO)+(-90:\radiusO)$) arc [start angle=270, end angle=315, radius=\radiusO] --
					($(centerO)+(-45:\radiusO)$) -- 
					($(centerO) + (45:\radiusO)$) arc [start angle=45, end angle=90, radius=\radiusO];	
				\fill[color = gray!30, opacity=0.1]  (centerO) circle [radius=\radiusO];
			\end{scope}
			
			\draw (centerO) circle [radius=\radiusO];
			\draw (centerO) circle [radius=\radiusA];
			\draw ($(centerO)+(90:\radiusO)$) -- ($(centerO)+(90:\radiusA)$);
			\draw ($(centerO)+(-90:\radiusO)$) -- ($(centerO)+(-90:\radiusA)$);
			\draw ($(centerO)+(45:\radiusO)$) -- ($(centerO) + (-45:\radiusO)$);
			\draw (centerO) node {$D_1$};
			
			\draw ($(centerO)+(90:\radiusO) + (-0.1,0.1)$) node {$\markedPoint$};
			\draw ($(centerO)+(90:\radiusA) + (-0.1,0.1)$) node {$\markedPoint$};
		\end{tikzpicture}
		\caption[]{}
		\label{fig:InclusionTangle}
	\end{minipage}
	\begin{minipage}[b]{.4\textwidth}\centering
		\begin{tikzpicture}[scale=0.8, every node/.style={scale=0.8}]
			\coordinate (centerO) at (0,0);
			\def\radiusO {2cm};
			\def\radiusA {.7cm};		
		
			\begin{scope}		
				\path[clip] ($(centerO)+(90:\radiusO)$) -- 
					($(centerO)+(90:\radiusA)$)  arc [start angle=90, end angle=45, radius=\radiusA] --
					($(centerO)+(45:\radiusA)$) .. 
						controls ($(centerO)+(40:1.6cm)$) and ($(centerO)+(0:1.2cm)$) ..
						($(centerO)+(0:1.2cm)$)..
						controls ($(centerO)+(0:1.2cm)$) and ($(centerO)+(-40:1.6cm)$) ..
						($(centerO)+(-45:\radiusA)$) --
					($(centerO)+(-45:\radiusA)$) arc [start angle=-45, end angle=-90, radius=\radiusA] --
					($(centerO)+(-90:\radiusO)$) arc [start angle=-90, end angle=90, radius=\radiusO];	
				\fill[color = gray!30, opacity=0.1]  (centerO) circle [radius=\radiusO];
			\end{scope}
			
			\draw (centerO) circle [radius=\radiusO];
			\draw (centerO) circle [radius=\radiusA];
			\draw ($(centerO)+(90:\radiusO)$) -- ($(centerO)+(90:\radiusA)$);
			\draw ($(centerO)+(-90:\radiusO)$) -- ($(centerO)+(-90:\radiusA)$);
			\draw ($(centerO)+(45:\radiusA)$) .. 
				controls ($(centerO)+(40:1.6cm)$) and ($(centerO)+(0:1.2cm)$) ..
				($(centerO)+(0:1.2cm)$)..
				controls ($(centerO)+(0:1.2cm)$) and ($(centerO)+(-40:1.6cm)$) ..
				($(centerO)+(-45:\radiusA)$);

			\draw (centerO) node {$D_1$};
			
			\draw ($(centerO)+(90:\radiusO) + (-0.1,0.1)$) node {$\markedPoint$};
			\draw ($(centerO)+(90:\radiusA) + (-0.1,0.1)$) node {$\markedPoint$};
		\end{tikzpicture}
		\caption[]{}
		\label{fig:ConditionalExpectationTangle}
	\end{minipage}
	
	\caption[Inclusion and conditional expectation tangles]{The inclusion tangle $\text{inc}_{1,+}^{2}:\mathcal{T}_{(1,+)}\rightarrow\mathcal{T}_{(2,+)}$ and the conditional expectation tangle $\mathcal{E}_{2,+}^{1}$.
	}
\end{figure}

Finally, we can also define another conditional expectation tangle that closes to the left. But observe that this one will switch the shading. We then get two a priori distinct notions of what we call  \emph{trace}, by fixing one of the two ways of building conditional expectation tangles (i.e.\ closing left or right), and successively applying it until the result lives in either $\mathcal{T}_{(0,+)}$ or $\mathcal{T}_{(0,-)}$.

Before going on to defining planar algebras, let's briefly discuss \emph{unshaded tangles}. The name already suggests the main difference: they do not come with a shading. Without a shading, an odd number of boundary points should be possible as well, so we say that the set of colors is then the set of natural numbers, i.e.\ $\mathfrak{Col} =\mathbb{N}_0$, and we consequently require an unshaded $n$-tangle to have $n$ distinct points on the boundary -- in the shaded case we needed an even number of points so that shading is actually possible. It is clear that then, in general, a representation in terms of boxes is not possible -- only when the number of points is even. 
\subsubsection{Planar Algebras}
Here, again, we will discuss \emph{shaded planar algebras} first, after which we briefly mention the unshaded case. In the fewest possible words,
\begin{center}
\begin{minipage}{0.8\textwidth}
A shaded planar algebra is an operad homomorphism from $\mathfrak{P}$ to a colored operad $\mathfrak{V}$ of vector spaces.
\end{minipage}
\end{center}
That sounds weird, so let us unravel it a bit. Consider a collection $P\equiv\left\{ P_x \right\}_{x\in\mathfrak{C}}$ of vector spaces, indexed by a set of colors. This collection lives in the symmetric monoidal category $\catname{Vect}_k$ of vector spaces over some field $k$, and we can look at things like
\begin{align*}
\mathrm{Hom}\left( \bigotimes_{x\in U} P_x, P_y\right),\quad \text{for } U\subset\mathfrak{Col},
\end{align*}
i.e.\ linear maps from the tensor product of some of the vector spaces to another vector space in $P$. To discover the operadic structure $\mathfrak{V}$ coming with $P$, we make use of the so-called \emph{tensor-hom adjunction}, which tells us that there is an isomorphism
\begin{align*}
\mathrm{Hom}\left( A\otimes B, C \right)\cong \mathrm{Hom}\left( A,\mathrm{Hom}\left( B,C \right) \right),
\end{align*}
natural in $A, C$ for every $B$ -- basically currying. Indeed, the isomorphism may be viewed as
\begin{align*}
A\otimes B \rightarrow C \overset{\sim}{\longleftrightarrow} A\rightarrow\left( B\rightarrow C \right).
\end{align*}

Let $U,V\subset \mathfrak{Col}$ be finite, and $x,y\in\mathfrak{Col}$. For any two maps
\begin{align*}
\bigotimes_{a_i\in U} P_{a_i}\morphism{f} P_x \qquad\text{and}\qquad \bigotimes_{b_j\in V} P_{b_j}\morphism{g} P_y,
\end{align*}
there exists a natural composition whenever $x=b_j$. To see this, fix some $l$ such that $x=b_l$.
By the tensor-hom adjunction there exists a map
\begin{align*}
P_x = P_{b_l}  \overset{\tilde{g}}{\dashrightarrow} \mathrm{Hom}\left( \bigotimes_{b\in V-\{b_l\}} P_b, P_y \right)
\end{align*}
corresponding uniquely to $g$. Then we define the composition $g\diamond_l f$ to be the map corresponding to
\begin{align*}
\tilde{g} \circ f \in& \,\mathrm{Hom}\left( \bigotimes_{a_i\in U} P_{a_i}, \mathrm{Hom}\left( \bigotimes_{b\in V-\{b_l\}} P_b, P_y  \right) \right)\\[2em]
\cong& \, \mathrm{Hom}\left(  \bigotimes_{a \in U\cup V-\{b_l\}} P_{a}, P_y\right)
\end{align*}
via the tensor-hom adjunction.
It is now easy to think about what would be considered an ``operad homomorphism'', and we record this in the following definition.
\begin{definition}[Planar Algebra]\index{Planar Algbra}\label{def:Planar Algebra}
The collection $P$ becomes a planar algebra by choosing an interpretation $Z:\mathfrak{P}\rightarrow\mathfrak{V}$ of planar tangles as linear maps, compatible with the operadic structures.

That is, an $(n,\epsilon)$-tangle $T$ with $k$ internal disks will be mapped to the linear transformation
\begin{align*}
Z(T): \bigotimes_{i=1}^k P_{(n_i,\epsilon_i)} \rightarrow P_{(n,\epsilon)},
\end{align*}
i.e.\ a linear map from the spaces associated with the inner boundaries of the tangle to the space associated with the outer boundary. If $k=0$ then the empty tensor product is the underlying field, and $Z(T)$ is therefore a vector in $P_{(n,\epsilon)}$. We also impose $Z(\id_{(n,\epsilon)}) = \id_{(n,\epsilon)}$.

The compatibility is then that $Z$ is something akin to a homomorphism, i.e.\
\begin{align*}
Z(T\circ_i S) = Z(T)\diamond_i Z(S)
\end{align*}
We will also call $P$ itself a planar algebra, and use $Z$ to denote the homomorphism in general.
\end{definition}

We have already implicitly encountered one example. If $P_x = \mathcal{T}_x$ for $x\in\mathfrak{Col}$, then we get an obvious planar algebra structure and can easily see that each $\mathcal{T}_{(n,\epsilon)}$ is turned into an associative algebra with multiplication
\begin{align*}\mathrm{mult}_n \equiv \,
\begin{tikzpicture}[scale=0.9, baseline]
	\coordinate (O) at (0,0);
	\node[draw] (A) at (0,0.8) {$D_2$}; 
	\node[draw] (B) at (0,-0.8) {$D_1$};
	\def\radiusO{2cm};
	\def\radiusA{.6cm};
	\draw (-1,-2) rectangle (1,2);
	\draw (0,2) -- node[right] {$n$} (A.north);
	\draw (A.south) -- node[right] {$n$} (B.north);
	\draw (B.south) -- node[right] {$n$} (0,-2);
\end{tikzpicture}\,.
\end{align*}
Here and in the following, an $n$ next to a string implies that there are actually $n$ parallel copies of that string, and the marked points are at the top left of the respective box, as discussed previously. The shading was not drawn, because it is implicitly clear.

The associativity of the multiplication is easily seen by simply remembering  that tangles are defined up to isotopy so that 
\[ \mult_n \circ_1 \mult_n \sim \mult_n\circ_2 \mult_n. \]
Note that this is even commutative if $n=0.$

\subsubsection*{Additional stuff, structure, property}
So far the family of planar $(n,\epsilon)$-tangles is quite large, because there can be an arbitrary collection of closed curves on the inside. Usually, the planar algebras one encounters have an additional property regarding these.

\begin{definition}[Modulus]
Let $P$ be a planar algebra over $\mathbb{F}$. Let $T$ be a tangle with a closed curve, and let $\hat{T}$ be the same tangle without that curve.

If there exists a
\begin{align*}
\delta \equiv 
\begin{cases}
	\delta_+ &\text{if region bounded by curve is shaded} \\
	\delta_- &\text{if region bounded by curve is unshaded}	
\end{cases}
\in\mathbb{F}
\end{align*}
such that
\begin{align*}
Z(T) = \delta Z(\hat{T})
\end{align*}
then $P$ is said to have \emph{shaded (unshaded) modulus} $\delta_+$ ($\delta_-$). We will only ever encounter the case where $\delta_+ = \delta_->0$, and we then simply say that \emph{$P$ has (positive) modulus $\delta$.}
\end{definition}
\noindent Looking forward to the Temperley-Lieb algebras, $\delta$ is also called \emph{loop value}.
We may also want to carry the involution on tangles to planar algebras.
\begin{definition}[Planar ${}^*$-algebra]
A planar algebra $P$ over $\mathbb{F}$ is called a \emph{planar ${}^*$-algebra} if $\mathbb{F}$ has a conjugation and if for all colors $x\in\mathfrak{Col}$ the vector space $P_x$ possesses an antilinear involution, such that this involution commutes with the one on planar tangles, that is
\begin{align*}
Z_T(v_1, \ldots, v_k)^* = Z_{T^*}(v_1^*,\ldots, v_k^*)
\end{align*}
for an $n$-tangle $T$ with $k$ internal boxes.
\end{definition}
If the vector spaces are even $C^*$-algebras with that involution, then one speaks of a \emph{planar $C^*$-algebra}.  We only encounter planar algebras consisting of complex vector spaces in this Letter.

The motivation for planar algebras came from the study of subfactors. As proven by Jones \cite{jones_planar_1999} subfactors give rise to a certain type of planar algebra, and every planar algebra of that type comes from a subfactor. We now define these important PAs. 
\begin{definition}[Subfactor Planar Algebra]\index{Planar Algebra!Subfactor}
A planar ${}^*$-algebra $S$ is a \emph{subfactor planar algebra} if
\begin{itemize}
\item[{(i)}] All vector spaces are finite-dimensional, and $\dim S_{0,\pm}=1$. This last property is sometimes called \emph{connectedness}.
\item[{(ii)}] It is \emph{spherical}: The assignment of complex numbers to labelled \footnote{If $T$ is a $(0, \epsilon_0)$-tangle with internal disks $D_i$, then a \emph{labelling} is an element in the range of $Z_T$, drawn as $T$ with some vectors $v_i\in S_{(k_i, \epsilon_i)}$ in the preimage `inserted' into its $i$th internal disk. A labelling is thus equivalently a choice of elements in the domain of $Z_T$.}
 $(0,\pm)$-tangles is invariant under isotopies of the 2-sphere. This assignment is called \emph{partition function}. Sphericality means that left and right traces coincide.
\item[{(iii)}]We can define a positive-definite bilinear form $\langle v,w \rangle\equiv \mathrm{tr}(w^* v)$, which is then called the \emph{inner product}.
\end{itemize}
\end{definition}

A subfactor planar algebra thus has single modulus, as implied by items {(i)} and {(ii)}.

\subsection{The Temperley-Lieb Planar Algebra}
The Temperley-Lieb planar algebra $\mathfrak{TL}$ is as simple as it gets, and we can define it as a shaded PA with modulus. The vector spaces consist of Temperley-Lieb diagrams, i.e.\ they are the subspaces of $\mathcal{T}_x$ consisting of all diagrams without internal boxes. It is then clear how to put the two different shadings on a tangle, by simply choosing the color of the first region after the marked point. That way we already get all vector spaces $TL_{(n,\epsilon)}$ whose colors are in $\mathbb{N}\times\{+,-\}$. The two remaining vector spaces $TL_{0,\pm}$ are identified with $\mathbb{C}$.

Tangles with internal boxes act on this collection of vector spaces in an obvious way: TL basis diagrams are tangles, so we just compose and extend by linearity, for each internal disk. Thus by labelling a planar tangle with vectors in TL algebras we get a (multi-)linear map between (products of) these algebras.

Looking back at the definition of subfactor planar algebra, we see that the TL PA almost satisfies it. It has a natural ${}^*$-structure, and the first two items in the definition follow immediately. The inner product, however, need not exist: the bilinear form defined by the trace is not necessarily positive definite, depending on the loop value. Taking quotients by the elements $x$ for which $\langle x,x \rangle=0$ then yields vectors spaces with honest inner products. We can thus turn every Temperley-Lieb planar algebra into a subfactor PA.

We paraphrase \cite{peters_planar_2010} almost verbatim:
\begin{fact}If $S$ is any subfactor planar algebra with modulus $\delta$, then the obvious `embedding' of the TL PA, i.e.\ the planar algebra homomorphism
\begin{align*}
\mathfrak{TL}\hookrightarrow S,
\end{align*}
given by interpreting Temperley-Lieb tangles as tangles without input disks, is injective if $\delta \geq 2$, otherwise its kernel is the radical of the bilinear form, so that modding out the kernel gives an injective planar algebra homomorphism.
\end{fact}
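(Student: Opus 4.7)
The plan is to first observe that $\phi: \mathfrak{TL} \hookrightarrow S$, defined by interpreting a Temperley--Lieb diagram $T$ (viewed as a tangle with no input disks) as the element $Z_S(T) \in S_{(n,\epsilon)}$ and extending linearly, is automatically a planar algebra homomorphism. Indeed, acting on TL-diagrams by any tangle $U$ on the $\mathfrak{TL}$ side again produces a TL-diagram (since TL sits inside the operad $\mathfrak{P}$ as tangles with no internal disks), and by the operad-homomorphism property of $Z_S$ the image in $S$ equals the action of $Z_S(U)$ on the images. Since the ${}^*$-operation is given by the flip tangle on both sides, $\phi$ is a ${}^*$-homomorphism.

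The key step is to establish the identity
\begin{equation*}
\langle \phi(x), \phi(y) \rangle_S \;=\; \langle x, y \rangle_{\mathfrak{TL}}, \qquad x,y \in \mathfrak{TL}_{(n,\epsilon)}.
\end{equation*}
Both sides are defined as $\mathrm{tr}(y^* x)$, where the trace is applied via a closure tangle. On the $\mathfrak{TL}$ side, $y^* x$ is again a TL element whose closure is a $(0,\pm)$-tangle with no input disks, and the modulus axiom evaluates it to a linear combination of $\delta^{k}$ with $k$ the number of closed loops of each basis summand. On the $S$ side, the homomorphism property gives $\phi(y)^*\phi(x) = \phi(y^* x)$, and applying the same closure tangle in $S$ yields the same loop counts and hence the same scalar, because $S$ has the same modulus $\delta$ by hypothesis and $\dim S_{(0,\pm)} = 1$. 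Linearity concludes the computation.

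From this identity the conclusion is almost immediate. If $\phi(x) = 0$, then $\langle x, x \rangle_{\mathfrak{TL}} = 0$; conversely, if $\langle x, x \rangle_{\mathfrak{TL}} = 0$, then $\langle \phi(x), \phi(x) \rangle_S = 0$, forcing $\phi(x) = 0$ by positive-definiteness of the inner product on the subfactor planar algebra $S$. Hence $\ker \phi = \{x : \langle x, x \rangle_{\mathfrak{TL}} = 0\}$. Whenever the TL form is positive semi-definite, Cauchy--Schwarz $|\langle x, y \rangle|^2 \le \langle x, x \rangle \langle y, y \rangle$ shows this null-cone coincides with the radical of the form; if it is in fact positive-definite the radical is zero and $\phi$ is injective outright. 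In the remaining case, quotienting $\mathfrak{TL}$ by $\ker \phi$ produces an injective planar algebra homomorphism, as claimed.

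The main obstacle, used implicitly above, is the classical background fact that the TL trace form is positive semi-definite precisely for the Jones values $\delta \ge 2$ and $\delta = 2\cos(\pi/n)$ with integer $n \ge 3$, with positive-definiteness on the nose exactly for $\delta \ge 2$. This depends on the structure of the Jones--Wenzl projectors and the representation theory of $\mathfrak{TL}_n$, and must either be invoked as a known result or reproved. Since any modulus $\delta$ arising from a subfactor lies in this admissible range by Jones' index theorem, the dichotomy stated in the Fact exhausts all cases.
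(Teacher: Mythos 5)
The paper does not actually prove this Fact: it is quoted ``almost verbatim'' from Peters' paper on planar algebras and treated as known background, so there is no in-paper argument to compare against. Judged on its own, your proof is the standard one and is essentially correct. The homomorphism property of $\phi$ is immediate from the operadic compatibility of $Z_S$, the isometry identity $\langle \phi(x),\phi(y)\rangle_S = \langle x,y\rangle_{\mathfrak{TL}}$ follows from matching loop counts against the common modulus $\delta$ together with $\dim S_{(0,\pm)}=1$ (with the implicit normalisation that the empty diagram maps to $1$), and the identification $\ker\phi = \{x : \langle x,x\rangle_{\mathfrak{TL}}=0\}$ uses positive-definiteness on $S$ exactly as you say. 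One small streamlining you missed: you do not need to invoke the classical semi-definiteness result to get $\ker\phi = \mathrm{rad}$, since the pullback of a positive-definite form along any linear map is automatically positive semi-definite, so Cauchy--Schwarz applies for free whenever such an $S$ exists. The only genuinely external input is the non-degeneracy of the Temperley--Lieb Markov trace form for $\delta\geq 2$ (needed for the ``injective'' half of the dichotomy), together with Jones' restriction on the possible moduli; you correctly flag both as results that must be cited rather than derived here, which is the same position the paper itself takes for the entire Fact.
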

Surely then a \emph{perfect tangle} $T$ is not in the kernel of that map: It are defined such that $\langle T,T \rangle \propto \mathrm{tr}(\mathbf{1_k})$, where $\mathbf{1}_k$ is the tangle that is the multiplicative unit for the multiplication of $k$-tangles.

\subsection*{Trivalent Categories}
Trivalent categories, as defined in \cite{morrison_categories_2017}, have the nice property that the calculations within are basically calculations in a certain unshaded planar algebra with additional constraints.

We first write down their category-theoretic definition, but from then on only use string diagrams to apply the methods from the theory of planar algebras. In the vein of the seminal paper we first introduce a few preliminary notions.

The first one is \emph{evaluability}: a monoidal $\mathbb{F}$-linear category is \emph{evaluable} if and only if $\dim\mathrm{Hom}(1,1) = 1$, and that endomorphism space is identified with $\mathbb{F}$ by sending the empty string diagram to $1\in \mathbb{F}$.

Secondly, to rule out a certain degeneracy like before, we call a pivotal category \emph{non-degenerate} iff for every nonzero morphism $f:X\rightarrow Y$ we can find a morphism $f^\prime:Y\rightarrow X$ such that the trace $\mathrm{tr}\left( f^\prime\circ f \right)\in\mathrm{Hom}(1,1)$ is not zero.

For a fixed object $X$ we introduce the notation 
\begin{align*}
\mathcal{C}_n \equiv\mathrm{Hom}\left( 1, X^{\otimes n} \right).
\end{align*}
These vector spaces form an unshaded planar algebra, so far only formally. After the next definition we will understand that $\mathcal{C}_n$ is the span of open planar trivalent graphs with $n$ boundary points, up to isotopy rel boundary, so that only tangles with internal disks of color at most $3$ play any role at all.

\begin{definition}[Trivalent Category]
A triple $\left( \mathcal{C},X,\tau \right)$ forms a \emph{trivalent category} if
\begin{itemize}
\item $\mathcal{C}$ is a strict pivotal $\mathbb{C}$-linear category that is evaluable and non-degenerate;
\item $X$ is a self-dual simple object such that 
	\begin{align*}
		\dim\mathcal{C}_1 = 0,\qquad \dim\mathcal{C}_2 = 1,\qquad \dim\mathcal{C}_3 = 1;
    \end{align*}
\item $\tau \in \mathcal{C}_3$, called the \emph{trivalent vertex}, is rotationally invariant, that is 
	\begin{align*}
		\begin{tikzpicture}[scale=0.5, baseline]
			\node[draw] (t) at (0,0) {$\tau$};
			\foreach \x in {-0.3,0} \draw ($(t.north) + (\x,0)$) .. controls +(\x,0.5) .. ++(\x,1);
			\draw ($(t.north) + (0.3,0)$) arc[start angle=180, end angle=0, radius=5mm] --
				++ (0,-1) arc[start angle=0, end angle=-180, radius=1.3cm] -- ++ (0,2);
		\end{tikzpicture}
		\, = \,
			\begin{tikzpicture}[scale=0.5, baseline]
			\node[draw] (t) at (0,0) {$\tau$};
			\foreach \x in {-0.3,0,0.3} \draw ($(t.north) + (\x,0)$) .. controls +(\x,0.5) .. ++(\x,1);
		\end{tikzpicture}\,,
	\end{align*}
	and $\mathcal{C}$ is generated by $\tau$. This means that every morphism is obtained from tensoring and adding identities, (co)evaluations, the self-duality, and trivalent vertices.
\end{itemize}
\end{definition}
Let from now on $\mathcal{C}$ be as in the definition. The object $X$ is actually symmetrically self-dual, meaning that there exists an isomorphism $\phi:X\morphism{\sim} X^*$ with $\phi^* = \phi$. This follows from the non-degeneracy and the dimension of $\mathcal{C}_3$, see Lemma 2.2 in \cite{morrison_categories_2017}.

 It is customary to drop the label of the morphism $\tau$ and to simply denote it by a trivalent vertex. Then every string diagram will be an open planar trivalent graph, where strings correspond to simple objects, and two graphs isotopic rel boundary correspond to the same morphism (the self-duality is not drawn). Our category is \emph{defined} to behave nicely, so we can use planar isotopy to deform graphs. For us it is not necessary to reinterpret the graphs as morphisms, since we are only using the graphical calculus.

 Because of the non-degeneracy and the dimensional restrictions we can already extract a few necessary parameters for trivalent categories. For one, any diagram containing the \emph{tadpole} or \emph{lollipop}
\begin{align*}
\begin{tikzpicture}[scale=0.5]
	\draw (1,0) arc[start angle=0, end angle=360, radius=1cm];
	\draw (0,-2) -- (0,-1);
\end{tikzpicture}
\end{align*}
as a subgraph is zero, since $\dim\mathcal{C}_1=0$.

Next, a loop (a circle shape) is an element in $\mathbb{C}$, called $d$.  This value must be nonzero, because the loop is the basis element of $\mathcal{C}_2$ composed with its dual.

On the other hand, the following element is in $\mathcal{C}_2$ and must thus be a scalar multiple of the single string,
\begin{align*}
\begin{tikzpicture}[scale=1, baseline]
	\node (A) at (-0.5,-0.8) {};
	\node (B) at (0.5,-0.8) {};
	\foreach \node in {A, B}
		\foreach \x in {-0.3,0,0.3} 
			\draw (\node.north) .. controls +(\x,0.5) .. ++(\x,1);
	\draw ($(A.north) + (0.3,1)$) arc[start angle=180, end angle=0, radius=2mm];
	\draw ($(A.north) + (0,1)$) arc[start angle=180, end angle=0, radius=5mm];
	\draw ($(A.north) + (-0.3,1)$) -- ++(0,0.5);
	\draw ($(B.north) + (0.3,1)$) -- ++(0,0.5);
\end{tikzpicture} \, = \, b\cdot \,
\begin{tikzpicture}[baseline]
	\draw (-0.5,0.3) arc[start angle=180, end angle=360, radius=5mm];
\end{tikzpicture}\,,
\end{align*}
and like before putting a cap on this must yield $bd\neq 0$, so the parameter $b$  must be nonzero. The trivalent vertex can be normalized so that we always take $b=1$.

Finally, we can rotate and compose trivalent vertices such that the result lives again in $\mathcal{C}_3$ (just imagine all legs bent up), and it must thus be some multiple of the trivalent vertex itself, like so:
\begin{align*}
\begin{tikzpicture}[baseline]
	\coordinate (a) at (0,0.5);
	\coordinate (b) at (-0.5,-0.2);
	\coordinate (c) at (0.5,-0.2);
	\draw (a) -- ++ (0,0.3);
	\draw (b) -- ++(210:3mm);
	\draw (c) -- ++(330:3mm);
	\path[draw] (b) -- (a) -- (c) -- (b);
\end{tikzpicture}\, = t\cdot \,
\begin{tikzpicture}[baseline]
	\foreach \angle in {90, 210, 330}
		\draw (0,0) -- ++(\angle:5mm);
\end{tikzpicture}.
\end{align*}
However, coming from the composition, $t$ could actually vanish here.

 In their paper, Morrison \emph{et al} then started classifying trivalent categories by
\begin{itemize}
\item[\text{a)}] the dimensions of the morphism spaces $\mathcal{C}_n$, together with
\item[\text{b)}] the parameters $d$ and $t$, often satisfying some polynomial unique to the category.
\end{itemize}

In this Letter, we only look at trivalent categories where $\mathcal{C}_4$ is 4-dimensional, and a basis is provided by the set of all planar trivalent graph with 4 four boundary points and no internal faces. Any trivalent category with this property is called a \emph{cubic category}. Cubic categories have a nice relation for the square (the trivalent graph with four boundary points, four trivalent vertices, and one internal face), and that makes them particularly nice to work with.

It is clear how to interpret open trivalent graphs with $n$-boundary points as unshaded $n$-tangles: We only need to agree on the marked points, and blow up each trivalent vertex to an internal $3$-disk, labelled with $\tau$, and we also simply call that disk $\tau$. \emph{Rotational invariance} then means $\text{rot}\,\tau = \tau$, so the choice of marked point on $3$-disks doesn't actually matter at all.

A caveat is appropriate here: In planar algebras, we used the symbol ${}^*$ do denote the adjoint or dual of a tangle. That symbol, however, is already reserved in trivalent categories for the category-theoretic dual. Thus for a morphism $f$ in a trivalent category, $f^*$ means ``rotate through $\pi$''. Luckily, it is quickly checked that the notion of horizontally reflecting morphisms is just an instance of what has been called \emph{conjugation} before, for example in \cite{wang_topological_2010}, where conjugation of morphisms $f,g$ in a pivotal $\mathbb{C}$-linear category was defined as an anti-linear operation $\overline{\phantom{f}}$ that switches source and target of a morphism and satisfies
\begin{align*}
\overline{\overline{f}} = f,\qquad\qquad \overline{f\otimes g} = \overline{f}\otimes\overline{g},\qquad\qquad \overline{f\circ g} = \overline{g}\circ \overline{f}.
\end{align*}
Keeping that in mind, translating the discussion of \emph{perfect tangles} to trivalent categories is then but a quick change in notation.

\subsubsection{The Temperley-Lieb algebras}
In this appendix we give one of the most important preliminary definitions. We will first define what the Temperley-Lieb algebra is rather abstractly, but quickly give an enlightening example. 

In the following an $n$-box is a rectangle with $n$ distinguished points along the top (w.r.t.\ the $y$-coordinate) boundary, and $n$ distinguished points along the bottom boundary. The size of the rectangle doesn't matter.
\begin{definition}[Temperley-Lieb algebra]\label{def:TLalgebra}\index{Temperley-Lieb algebra}
Let $q\in\mathbb{R}^\times, n\in\mathbb{N}$. The \emph{Temperley-Lieb algebra} $TL_n(q)$ on $n$ \emph{strands} and with \emph{loop parameter} $q$ is the associative algebra built like this:
\begin{itemize}
\item[•] The underlying vector space is the free vector space $\mathbb{C}\mathcal{B}_n$, with
	\begin{align*}
		\mathcal{B}_n \equiv
		\left\{
		\begin{minipage}{0.6\textwidth}
			all possible crossingless matchings of the $2n$ distinguished points on the boundary of an $n$-box
		\end{minipage}
		\right\}\Bigg/\sim	\end{align*}
	By a \emph{matching} we mean a (smooth) curve connecting two distinct points. The meaning of \emph{crossingless} is then clear. Finally, $\sim$ is the equivalence relation of two matchings being isotopic rel boundary.
\item[•] Multiplication $T\cdot S$ for $T,S\in \mathcal{B}_n$ is given by \emph{stacking} the box $S$ on top of $T$, removing the middle line, and then substituting any loop, i.e.\ a closed curve, by a factor of $q$. This is then extended linearly to all of $TL_n(q)$.
\end{itemize}
\end{definition}
Often it is also required that $q$ be positive. We make this definition tangible in the following example.
\begin{example}\label{ex:TL3}
Let $n=3$, $q$ arbitrary. Then the basis of $TL_3(q)$ is the set
\begin{align*}
\mathcal{B}_3 = \left\{\,\,
	\begin{tikzpicture}[baseline=-1mm,scale=0.9]
		\pgfmathsetmacro{\yMax}{0.8};
		\pgfmathsetmacro{\offset}{2.5};
		\pgfmathsetmacro{\comma}{1};
		\foreach \x in {0,...,4}
			\draw[dashed] (-0.7+\offset*\x,-\yMax) rectangle (0.7+\offset*\x,\yMax);
		\begin{scope}[shift={(0,0)}]
			\foreach \x in {-0.5,0,0.5}
				\draw (\x, \yMax) -- (\x,-\yMax);
			\node (comma) at (+\comma,-\yMax) {$,$};
		\end{scope}
		\begin{scope}[shift={(1*\offset,0)}]
			\draw (0,\yMax) .. controls (0,-0.5+\yMax) and (0.5,-0.5+\yMax) .. (0.5,\yMax);
			\draw (-0.5,-\yMax) .. controls (-0.5,1-0.5-\yMax) and (0,1-0.5-\yMax) .. (0,-\yMax);
			\draw (-0.5,\yMax) .. controls (-0.5,0) and (0.5,0) .. (0.5, -\yMax);
			\node (comma) at (+\comma,-\yMax) {$,$};
		\end{scope}
		\begin{scope}[shift={(2*\offset,0)}]
			\draw (-0.5,\yMax) .. controls (-0.5,-0.5+\yMax) and (0,-0.5+\yMax) .. (0,\yMax);
			\draw (-0,-\yMax) .. controls (-0,1-0.5-\yMax) and (0.5,1-0.5-\yMax) .. (0.5,-\yMax);
			\draw (-0.5,-\yMax) .. controls (-0.5,0) and (0.5,0) .. (0.5, \yMax);
			\node (comma) at (+\comma,-\yMax) {$,$};
		\end{scope}
		\begin{scope}[shift={(3*\offset,0)}]
			\draw (-0.5,\yMax) -- (-0.5,-\yMax);
			\draw (0,\yMax) .. controls (0,-0.5+\yMax) and (0.5,-0.5+\yMax) .. (0.5,\yMax);
			\draw (-0,-\yMax) .. controls (-0,1-0.5-\yMax) and (0.5,1-0.5-\yMax) .. (0.5,-\yMax);
			\node (comma) at (+\comma,-\yMax) {$,$};
		\end{scope}
		\begin{scope}[shift={(4*\offset,0)}]
			\draw (0.5,\yMax) -- (0.5,-\yMax);
			\draw (-0.5,\yMax) .. controls (-0.5,-0.5+\yMax) and (0,-0.5+\yMax) .. (0,\yMax);
			\draw (-0.5,-\yMax) .. controls (-0.5,1-0.5-\yMax) and (0,1-0.5-\yMax) .. (0,-\yMax);
		\end{scope}
	\end{tikzpicture}
\,\,\right\}.
\end{align*}
We show how to perform multiplication:
\begin{align*}
\begin{tikzpicture}[scale=0.8]
	\pgfmathsetmacro{\yMax}{0.8};
	\begin{scope}[shift={(0,0)}]
		\draw[dashed] (-0.7,-\yMax) rectangle (0.7,\yMax);
		\draw (0,\yMax) .. controls (0,-0.5+\yMax) and (0.5,-0.5+\yMax) .. (0.5,\yMax);
		\draw (-0.5,-\yMax) .. controls (-0.5,1-0.5-\yMax) and (0,1-0.5-\yMax) .. (0,-\yMax);
		\draw (-0.5,\yMax) .. controls (-0.5,0) and (0.5,0) .. (0.5, -\yMax);
		\node (comma) at (1,0) {$\cdot$};
	\end{scope}
	\begin{scope}[shift={(2,0)}]
		\draw[dashed] (-0.7,-\yMax) rectangle (0.7,\yMax);
		\draw (-0.5,\yMax) .. controls (-0.5,-0.5+\yMax) and (0,-0.5+\yMax) .. (0,\yMax);
		\draw (-0,-\yMax) .. controls (-0,1-0.5-\yMax) and (0.5,1-0.5-\yMax) .. (0.5,-\yMax);
		\draw (-0.5,-\yMax) .. controls (-0.5,0) and (0.5,0) .. (0.5, \yMax);
		\node (comma) at (1.2,0) {$=$};
	\end{scope}
 \begin{scope}[shift={(4.5,0)}]
	 	\draw[dashed] (-0.7,-2*\yMax) rectangle (0.7,+2*\yMax);
		\begin{scope}[shift={(0,-\yMax)}]
			\draw (0,\yMax) .. controls (0,-0.5+\yMax) and (0.5,-0.5+\yMax) .. (0.5,\yMax);
			\draw (-0.5,-\yMax) .. controls (-0.5,1-0.5-\yMax) and (0,1-0.5-\yMax) .. (0,-\yMax);
			\draw (-0.5,\yMax) .. controls (-0.5,0) and (0.5,0) .. (0.5, -\yMax);
		\end{scope}
		\begin{scope}[shift={(0,+\yMax)}]
			\draw (-0.5,+\yMax) .. controls (-0.5,-0.5+\yMax) and (0,-0.5+\yMax) .. (0,+\yMax);
			\draw (-0,-\yMax) .. controls (-0,1-0.5-\yMax) and (0.5,1-0.5-\yMax) .. (0.5,-\yMax);
			\draw (-0.5,-\yMax) .. controls (-0.5,0) and (0.5,0) .. (0.5, +\yMax);
		\end{scope}
		\node (comma) at (1.2,0) {$=$};
	\end{scope}
 \begin{scope}[shift={(7.3,0)}]
 		\node (q) at (-1,0) {$q$};
		\begin{scope}[shift={(0,0)}]
			\draw[dashed] (-0.7,-\yMax) rectangle (0.7,+\yMax);
			\draw (0.5,\yMax) -- (0.5,-\yMax);
			\draw (-0.5,\yMax) .. controls (-0.5,-0.5+\yMax) and (0,-0.5+\yMax) .. (0,\yMax);
			\draw (-0.5,-\yMax) .. controls (-0.5,1-0.5-\yMax) and (0,1-0.5-\yMax) .. (0,-\yMax);
		\end{scope}
	\end{scope}
\end{tikzpicture}.
\end{align*}
In the last step, we removed the circle and multiplied the result with $q$, then straightened the line and isotoped the box to give an element of $\mathcal{B}_3$. 

It is clear that $TL_3(q)$ is closed under this multiplication.
\end{example}

There is no harm in not drawing the surrounding box, since we know that it is there. But omitting it makes drawing diagrams a lot faster, so we will do just that. We may simply write $TL_n$ for $TL_n(q)$ if $q$ is clear from context. In the following we will fix any $q$.

 The elements in $TL_n$ are called \emph{$n$-tangles}. It is easy to see that the size of $\mathcal{B}_n$, which is the dimension of $TL_n$, is the same as the number of possible ways of putting matching opening and closing brackets on a product of $n+1$ factors, i.e.\ a product of $n$ pairs of factors, like this:
\begin{align*}
a((bc)d), \, a(b(cd)),\, (ab)(cd),\, ((ab)c)d,\, (a(bc))d
\end{align*}
Now, it would surely be nice to have a closed expression for this number, for any $n$. And indeed, this is a well-known problem from combinatorics. The solution is given by the \emph{Catalan numbers},
\begin{align*}
C_n \equiv\frac{1}{n+1} \binom{2n}{n},\quad n\geq 0.
\end{align*}
These numbers grow rather quickly:
\begin{align*}
	1, 1, 2, 5, 14, 42, 132, 429, 1430, 4862, 16796, \ldots
\end{align*}
This makes calculating examples of perfect tangles particularly tedious.

 It is also worthwhile to briefly talk about another description of these algebras, namely the ``classical'' one in terms of generators and relations, see e.g.\ \cite{abramsky_temperley-lieb_2009}. Fix a ring $R$, a nonzero constant $q\in R$, and a natural number $n>1$. The Temperley-Lieb algebra on $n$ strands with loop value $q$ is then the unital associative algebra $A_n(q)$ over $R$ generated by the unit $1_A$ and generators $U_1, U_2,\ldots, U_{n-1}$ subject to the relations
\vspace*{0.5em}
\begin{alignat*}{2}
U_i^2 &= q\cdot U_i &&\\
U_i U_j U_i &= U_i &&\lvert i-j \rvert = 1\\
U_i U_j &= U_j U_i \qquad && \lvert i-j \rvert \geq 2.
\end{alignat*}~

\noindent Translating this into our previous description, we remark that the generators for $n=3$ are precisely the two last elements in $\mathcal{B}_3$ and the straight lines. For example,
\begin{align*}
\begin{tikzpicture}[scale=0.8,baseline]
	\pgfmathsetmacro{\yMax}{0.8};
	\begin{scope}[shift={(0,0)}]
		\draw (-0.5,\yMax) .. controls (-0.5,-0.5+\yMax) and (0,-0.5+\yMax) .. (0,\yMax);
		\draw (-0,-\yMax) .. controls (-0,1-0.5-\yMax) and (0.5,1-0.5-\yMax) .. (0.5,-\yMax);
		\draw (-0.5,-\yMax) .. controls (-0.5,0) and (0.5,0) .. (0.5, \yMax);
		\node (comma) at (1.2,0) {$=$};
	\end{scope}
			\begin{scope}[shift={(2.5,-\yMax)}]
			\draw (-0.5,\yMax) -- (-0.5,-\yMax);
			\draw (0,\yMax) .. controls (0,-0.5+\yMax) and (0.5,-0.5+\yMax) .. (0.5,\yMax);
			\draw (-0,-\yMax) .. controls (-0,1-0.5-\yMax) and (0.5,1-0.5-\yMax) .. (0.5,-\yMax);
		\end{scope}
		\begin{scope}[shift={(2.5,\yMax)}]
			\draw (0.5,\yMax) -- (0.5,-\yMax);
			\draw (-0.5,\yMax) .. controls (-0.5,-0.5+\yMax) and (0,-0.5+\yMax) .. (0,\yMax);
			\draw (-0.5,-\yMax) .. controls (-0.5,1-0.5-\yMax) and (0,1-0.5-\yMax) .. (0,-\yMax);
		\end{scope}
\end{tikzpicture}\,.
\end{align*}
We record the following well-known fact.

\begin{fact}
Both descriptions of the Temperley-Lieb algebra given here are equivalent. In particular it can be shown that all crossingless matchings occurring in $\mathcal{B}_n$ can be obtained by multiplying diagrams consisting only of straight vertical lines, and cups and caps.
\end{fact}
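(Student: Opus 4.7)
The plan is to construct a bijective algebra homomorphism $\Phi: A_n(q) \to TL_n(q)$ between the algebraic presentation and the diagrammatic algebra. Define $\Phi$ on generators by sending $U_i$ to the diagram $e_i \in TL_n(q)$ that has straight vertical strands at all positions outside $\{i, i+1\}$, with the top points at $i, i+1$ joined by a cap and the bottom points at $i, i+1$ joined by a cup. To verify that $\Phi$ is well-defined, I would check each defining relation of $A_n(q)$ diagrammatically. For $U_i^2 = q U_i$, stacking two copies of $e_i$ produces a single closed loop between the cap of the upper $e_i$ and the cup of the lower $e_i$; removing the loop incurs a factor $q$ and leaves $e_i$. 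For $U_i U_{i\pm 1} U_i = U_i$, the zig-zag along neighbouring strands straightens to $e_i$ by isotopy. For $\lvert i - j\rvert \geq 2$, the subdiagrams $e_i$ and $e_j$ touch disjoint strand pairs, hence commute trivially.

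The main content is showing that $\Phi$ is surjective: every $D \in \mathcal{B}_n$ lies in the image. I would induct on the number $k$ of horizontal arcs of $D$, where a horizontal arc is a strand joining two top points or two bottom points. If $k = 0$, then $D$ consists of through-strands only and, by planarity, must be the identity diagram, which is the empty product of $e_j$'s. For $k \geq 1$, crossinglessness forces an innermost horizontal arc: either a top cap joining adjacent positions $(i, i+1)$, or, symmetrically, a bottom cup joining adjacent positions. Assuming a top cap at $(i, i+1)$, a planar isotopy pulls a small cup from the bottom strand pair at positions $(i, i+1)$ upward until it meets the cap, exhibiting $D = e_i \cdot D'$ with $D' \in \mathcal{B}_n$ having strictly fewer horizontal arcs. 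The induction hypothesis expresses $D'$ as a product of the $e_j$'s; the symmetric argument handles the bottom-cup case.

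Injectivity follows from a dimension count. On the diagrammatic side $\dim TL_n(q) = |\mathcal{B}_n| = C_n$ by construction. On the algebraic side, the relations of $A_n(q)$ allow one to reduce any monomial in the $U_i$'s to a Jones normal form, namely a concatenation of blocks $U_{a_l} U_{a_l - 1} \cdots U_{b_l}$ with the sequences $a_1 < a_2 < \cdots$ and $b_1 < b_2 < \cdots$ strictly increasing. A standard combinatorial count matches the number of such normal forms with $C_n$, yielding $\dim A_n(q) \leq C_n$. A surjection from a space of dimension at most $C_n$ onto one of dimension $C_n$ is necessarily an isomorphism.

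The main obstacle is the surjectivity induction, specifically identifying the correct reduction move and certifying that its output is again a crossingless matching with strictly fewer horizontal arcs. Pinning down the isotopy that extracts the factor $e_i$ from an innermost top cap requires some care to ensure that the complementary matching $D'$ does not accidentally introduce new arcs or crossings; once this combinatorial step is formalized the dimension-count argument closes the proof routinely.
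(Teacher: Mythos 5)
The paper records this statement as a well-known fact and gives no proof of its own, so there is no internal argument to compare against; your proposal follows the standard textbook route: a homomorphism $\Phi:A_n(q)\to TL_n(q)$ defined on generators, verification of the three defining relations, surjectivity by decomposing diagrams, and injectivity by bounding $\dim A_n(q)$ by the number $C_n$ of Jones normal forms. The relation checks, the observation that an innermost horizontal arc must join adjacent boundary points, the $k=0$ base case, and the dimension-count half are all sound.

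The genuine gap is in the surjectivity induction: the measure you induct on, the number of horizontal arcs, does not strictly decrease under the factorization you describe. Take $n=3$ and let $D$ be the basis diagram of $\mathcal{B}_3$ with a cap joining top points $2,3$, a cup joining bottom points $1,2$, and a single through-strand from top $1$ to bottom $3$; it has two horizontal arcs. Its only top cap is the innermost one at $(2,3)$, and the unique planar factorization extracting $e_2$ there is $D=e_1\cdot e_2$ (in the appropriate stacking order), so the complementary diagram is $D'=e_1$ --- which again has two horizontal arcs. The cap at $(2,3)$ disappears, but a new cap at $(1,2)$ is created, because the pair that gets rerouted through the cup of the extracted $e_2$ is the through-strand rather than a horizontal arc; the count drops by two only when the rerouted pair is a horizontal arc on the opposite edge, and is conserved when it is a through-strand. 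Factoring from the innermost bottom cup instead yields $D''=e_2$, again with two horizontal arcs, so for this $D$ no admissible reduction decreases your measure and the induction does not close. (Your own caveat that $D'$ might ``accidentally introduce new arcs'' is precisely what happens, and it is not resolved.) The standard repairs are either to prove surjectivity constructively via the same Jones normal form you invoke for the dimension bound --- one reads the word $\left(U_{a_1}\cdots U_{b_1}\right)\cdots\left(U_{a_r}\cdots U_{b_r}\right)$ directly off the positions of the caps and cups, so that the $C_n$ normal-form words biject with the $C_n$ diagrams --- or to induct on a quantity that genuinely decreases (for instance the number of horizontal arcs plus the number of non-vertical through-strands, or simply on $n$ with a case analysis on the partner of the last boundary point).
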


\subsection{Constructing large perfect tangles}\label{sec:FirstGeneralConstruction}
First a quick informal discussion to sharpen the intuition. Consider the following braid.
\begin{align*}
\begin{tikzpicture}[scale = 0.5]
\braid[number of strands = 5] s_4 s_3 s_4 s_2 s_3 s_4 s_1 s_2 s_3 s_4;
\end{tikzpicture}
\end{align*}
Let us try to explain what is happening here: We see that the string starting at position $i$ on the top is attached to position $5-i$ at the bottom. The first rotation will bring the string attached to the left-most point on the top down, and the one attached to the right-most point on the bottom up. But in this setting this is one and the same string! Multiplying with the adjoint of this rotated version will therefore not hinder our efforts in pulling all strings straight. 

Clearly, all subsequent rotations have similar effects. This realization leads to a somewhat educated guess: 
\begin{center}
Tangles built this way are perfect.
\end{center}
The goal of this section is to give both a well-defined construction and a rigorous proof of the statement.

To this end, fix a perfect 2-tangle $T\in P_{2\cdot 2}$, and note that we consider things unshaded.

\begin{definition}\label{def:T}
Let $\widetilde{T}_2\equiv T$, and define inductively for all $n\geq 3$
\begin{align*}
\widetilde{T}_n \equiv\quad
	\begin{tikzpicture}[baseline=-0.5mm]
		\node[draw] (T1) at (0,0) {$\widetilde{T}_{n-1}$};
		\node[draw] (T2) at (2,0) {$T$};
		\path ($(T1.north) + (-0.2,0)$) edge node[left]{$n-2$} ($(T1.north) + (-0.2,0.5)$)
			($(T1.south)$) edge node[left]{$n-1$} ($(T1.south) + (0,-0.5)$)
			($(T2.north) + (-0.2,0)$) edge ($(T2.north) + (-0.2,0.5)$)
			($(T2.north) + (0.2,0)$) edge ($(T2.north) + (0.2,0.5)$)
			($(T2.south) + (0.2,0)$) edge ($(T2.south) + (0.2,-0.5)$);		
		\draw  ($(T1.north) + (0.2,0)$) .. controls (1.2,1.5) and (0.8,-1.5).. ($(T2.south) + (-0.2,0)$);
	\end{tikzpicture}\quad .
\end{align*}

Also, with $\widehat{T}_2 \equiv T$, we define the flipped version of this:
\begin{align*}
\widehat{T}_n \equiv\quad
	\begin{tikzpicture}[baseline=-0.5mm]
		\node[draw] (T1) at (0,0) {$\widehat{T}_{n-1}$};
		\node[draw] (T2) at (2,0) {$T$};
		\path ($(T1.south) + (-0.2,0)$) edge node[left]{$n-2$} ($(T1.south) + (-0.2,-0.5)$)
			($(T1.north)$) edge node[left]{$n-1$} ($(T1.north) + (0,0.5)$)
			($(T2.south) + (-0.2,0)$) edge ($(T2.south) + (-0.2,-0.5)$)
			($(T2.south) + (0.2,0)$) edge ($(T2.south) + (0.2,-0.5)$)
			($(T2.north) + (0.2,0)$) edge ($(T2.north) + (0.2,0.5)$);		
		\draw  ($(T1.south) + (0.2,0)$) .. controls (1.2,-1.5) and (0.8,1.5).. ($(T2.north) + (-0.2,0)$);
	\end{tikzpicture}\quad,
\end{align*}
which is basically the adjoint of the underlying tangle of $\widetilde{T}_n$, with $T$ inserted into all boxes. Both of these are elements in $P_{2n}$.
\end{definition}

\begin{lemma}\label{lem:T=1}
$\widetilde{T}_n \cdot \widetilde{T}_n^* \propto \mathbf{1}_n$, and a similar result holds for $\widehat{T}_n$.
\begin{proof}
By induction. The base case $n=2$ is clear, since $T$ is perfect. Then suppose that it is true for $n\geq 2$. We get
\begin{align*}
	\begin{tikzpicture}[baseline=-1mm]
		\node[draw] (T1) at (0,0.7) {$\widetilde{T}_{n+1}^*$};
		\node[draw] (T2) at (0,-0.7) {$\widetilde{T}_{n+1}$};
		\path
			(T1.north) edge ($(T1.north) + (0,0.5)$)
			(T2.south) edge ($(T2.south) + (0,-.5)$)
			(T1.south) edge (T2.north);
	\end{tikzpicture}
\,&=
	\begin{tikzpicture}[baseline =-8mm]
		\begin{scope}
			\node[draw] (T1) at (0,0) {$\widehat{T}_{n}^*$};
			\node[draw] (T2) at (2,0) {$T^*$};
			\path ($(T1.south) + (-0.2,0)$) edge ($(T1.south) + (-0.2,-0.5)$)
				($(T1.north)$) edge ($(T1.north) + (0,0.5)$)
				($(T2.south) + (-0.2,0)$) edge ($(T2.south) + (-0.2,-0.5)$)
				($(T2.south) + (0.2,0)$) edge ($(T2.south) + (0.2,-0.5)$)
				($(T2.north) + (0.2,0)$) edge ($(T2.north) + (0.2,0.5)$);		
			\draw  ($(T1.south) + (0.2,0)$) .. controls (1.2,-1.5) and (0.8,1.5).. ($(T2.north) + (-0.2,0)$);
		\end{scope}
		\begin{scope}
			\node[draw] (T1) at (0,-1.7) {$\widetilde{T}_{n}$};
			\node[draw] (T2) at (2,-1.5) {$T$};
			\path ($(T1.north) + (-0.2,0)$) edge ($(T1.north) + (-0.2,0.5)$)
				($(T1.south)$) edge ($(T1.south) + (0,-0.5)$)
				($(T2.north) + (-0.2,0)$) edge ($(T2.north) + (-0.2,0.5)$)
				($(T2.north) + (0.2,0)$) edge ($(T2.north) + (0.2,0.5)$)
				($(T2.south) + (0.2,0)$) edge ($(T2.south) + (0.2,-0.5)$);		
			\draw  ($(T1.north) + (0.2,0)$) .. controls (1.2,0) and (0.8,-3).. ($(T2.south) + (-0.2,0)$);
		\end{scope}
	\end{tikzpicture}
\,\propto
	\begin{tikzpicture}[baseline =-8mm]
		\begin{scope}
			\node[draw] (T1) at (0,0) {$\widehat{T}_{n}^*$};
			\path ($(T1.south) + (0,0)$) edge ($(T1.south) + (0,-0.5)$)
				($(T1.north)$) edge ($(T1.north) + (0,0.5)$)
				($(T1.north) + (1,0.5)$) edge ($(T1.south) + (1,-0.5)$);
		\end{scope}
		\begin{scope}
			\node[draw] (T1) at (0,-1.7) {$\widetilde{T}_{n}$};
			\path ($(T1.north) + (0,0)$) edge ($(T1.north) + (0,0.5)$)
				($(T1.south)$) edge ($(T1.south) + (0,-0.5)$)
				($(T1.north) + (1,0.5)$) edge ($(T1.south) + (1,-0.5)$);
		\end{scope}
	\end{tikzpicture}
\, \stackrel{\mathrm{I.H.}}{\propto} \mathbf{1}_{n+1}
\end{align*}
\end{proof}
\end{lemma}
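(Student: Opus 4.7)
The plan is to proceed by induction on $n$. The base case $n=2$ is immediate, since $\widetilde{T}_2 = T$ is perfect by assumption, which gives in particular $T\cdot T^* \propto \mathbf{1}_2$. For the inductive step I would assume the result for $\widetilde{T}_n$ together with the parallel statement for $\widehat{T}_n$, since both flavours will be needed simultaneously; the ``similar result'' parenthetical in the lemma effectively packages this as a joint inductive hypothesis.

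Next I would unfold the recursive definition on both sides of $\widetilde{T}_{n+1}\cdot \widetilde{T}_{n+1}^*$. The top half expands to the adjoint shape containing $\widehat{T}_n^*$ on the left and $T^*$ on the right (connected by the mirror image of the insertion strand), while the bottom half contains $\widetilde{T}_n$ on the left and $T$ on the right. When these are stacked, the two middle strands of $T^*$ sit directly on the two corresponding strands of $T$, while the remaining leg of $T^*$ is linked to the remaining leg of $T$ by a long path that runs through $\widehat{T}_n^*$, across the stacking line, and through $\widetilde{T}_n$.

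The crucial geometric observation is that this arrangement of $T$ and $T^*$ is precisely a one-click rotated multiplication: after planar isotopy the two boxes compose as $T\rotatedmultiplication{1} T^*$ with only the ``shared'' leg forced through the left half of the diagram. By the perfect-tangle condition applied to $T$, this rotated product is proportional to $\mathbf{1}_2$. The collapse turns the two middle strands into two straight vertical strings and turns the long cross-link into a single straight string running up the right column of the diagram. What remains on the left is exactly $\widehat{T}_n^* \cdot \widetilde{T}_n$ tensored with that single identity strand.

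At this point I would apply the inductive hypothesis to $\widehat{T}_n^* \cdot \widetilde{T}_n$, which is $\propto \mathbf{1}_n$ by the joint statement. Combined with the surviving vertical strand this yields $\mathbf{1}_{n+1}$ up to a scalar, as required. The analogue for $\widehat{T}_n$ follows by reflecting every diagram vertically and running the same argument, closing the joint induction. The main technical obstacle I anticipate is the careful identification of the $T$–$T^*$ configuration as a genuine rotated multiplication rather than a plain product; this is a pure diagrammatic check using planar isotopy, but one must be meticulous about how the insertion strand is routed so that exactly one leg of each $T$-box is forced through the left sub-diagram. A second, bookkeeping-level subtlety is recognising that the sub-term that appears is $\widehat{T}_n^*\cdot\widetilde{T}_n$ rather than $\widetilde{T}_n^*\cdot\widetilde{T}_n$, which is why the inductive hypothesis must be formulated for both $\widetilde{T}$ and $\widehat{T}$ shapes in tandem.
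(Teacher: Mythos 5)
Your proof follows the paper's argument essentially step for step: unfold the recursive definitions of $\widetilde{T}_{n+1}$ and its adjoint, collapse the $T$--$T^*$ pair using perfectness of $T$, and then apply the (joint tilde/hat) inductive hypothesis to the surviving $\widehat{T}_n^*\cdot\widetilde{T}_n$ tensored with one straight strand. One correction to your ``crucial geometric observation'': the $T$--$T^*$ configuration is \emph{not} a one-click rotated multiplication --- the two boxes meet along two parallel vertical strands in standard stacking position, so the sub-diagram is the plain product $T\cdot T^*$ (the $n=0$ case of the perfectness condition), and the other two legs of each box are merely routed elsewhere in the ambient diagram without affecting the local composition; no planar isotopy turns this into $T\rotatedmultiplication{1}T^*$, since the two contraction patterns pair up different legs. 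The misidentification is harmless here because $T$ is perfect with respect to every rotation, but the step you single out as the main technical obstacle is in fact the easiest case of perfectness.
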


Also note the following:
\begin{equation}\label{eq:concatenation of T}
\begin{tikzpicture}[baseline=-0.5mm]
	\node[draw] (T1) at (0,0) {$\widetilde{T}_{n}$};
	\node[draw] (T2) at (2,0) {$\widetilde{T}_m$};
	\path ($(T1.north) + (-.2,0)$) edge ($(T1.north) + (-0.2,0.5)$)
		($(T1.south)$) edge ($(T1.south) + (0,-0.5)$)
		($(T2.north) + (0,0)$) edge ($(T2.north) + (0,0.5)$)
		($(T2.south) + (0.2,0)$) edge ($(T2.south) + (0.2,-0.5)$);		
	\draw  ($(T1.north) + (0.2,0)$) .. controls (1.2,1.5) and (0.8,-1.5).. ($(T2.south) + (-0.2,0)$);
	\node (eq) at (3.5,0) {$= \widetilde{T}_{n+m-1}$};
\end{tikzpicture}
\end{equation}
(here we are contracting a single strand) and similarly for $\widehat{T}_n$, for all $n,m \geq 2$. 

Consider now the following construction:
\begin{definition}\label{def:B}
Let $A_2 \equiv T = \widetilde{T}_2$, and define for $n\geq 3$
\begin{align*}
A_n \equiv \quad
	\begin{tikzpicture}[baseline = -1mm]
		\node[draw] (T) at (0,0.7) {$\widetilde{T}_{n}$};
		\node[draw] (B) at (0.5, -.7) {$A_{n-1}$};
		\path (T.north) edge ($(T.north) + (0,0.5)$) 
			($(T.south) + (-0.2, 0)$) edge ($(T.south |- B.south) + (-0.2, -0.5)$)
			($(T.south) + (0.2,0)$) edge node[right]{$n-1$} (B.north)
			($(B.south)$) edge ($(B.south |- B.south) + (-0.0, -0.5)$);
	\end{tikzpicture}\quad .
\end{align*}
\end{definition}

Our goal is to show that each $A_n\in P_{2n}$ is perfect. From {Lemma \ref{lem:T=1}} it is clear that $A_n \cdot A_n^* \propto \mathbf{1}_n$. This is because \text{Lemma \ref{lem:T=1}} asserts that $\widetilde{T}$s occurring together with their adjoints give (something proportional to) identities until only one box is left (and some strings going straight from the top to the bottom), namely $A_2 = T$. But $T$ is perfect, so this gives (something proportional to) the identity.

\bigskip\noindent A basic yet important and beautiful property of the $A_n$ is now stated.
\begin{lemma}\label{lem:Bcommuting}
For all $n\geq 3$
\begin{align*}
	A_n = \quad
	\begin{tikzpicture}[baseline = -1mm]
		\node[draw] (T) at (0,-.7) {$\widehat{T}_{n}$};
		\node[draw] (B) at (0.5, .7) {$A_{n-1}$};
		\path (T.south) edge ($(T.south) + (0,-.5)$) 
			($(T.north) + (-0.2, 0)$) edge ($(T.north |- B.north) + (-0.2, 0.5)$)
			($(T.north) + (0.2,0)$) edge (B.south)
			($(B.north)$) edge ($(B.north |- B.north) + (-0.0, 0.5)$);
	\end{tikzpicture}\quad .
\end{align*}
\begin{proof} By induction. For the base case $n=3$, first remember $A_2=\widetilde{T}_2=\widehat{T}_2 = T$. Then
\begin{align*}
A_3 =\,
	\begin{tikzpicture}[baseline=-2mm]
		\node[draw] (T1) at (0,0.5) {$\widetilde{T}_3$};
		\node[draw] (T2) at (0.2,-0.5) {$T$};
		\foreach \x in {-1,0,1}
			\draw ($(T1.north) + (\x *0.2,0)$) edge ($(T1.north) + (\x *0.2,0.4)$);
		\foreach \x in {-0.1,0.1}{
			\draw ($(T1.south) + (0.1+\x,0)$) -- ($(T2.north) + (\x,0)$);
			\draw ($(T2.south) + (\x,0)$) edge ($(T2.south) + (\x,-0.4)$);
		}
		\draw ($(T1.south) + (-0.2,0)$) edge ($(T1.south |- T2.south) + (-0.2,-0.4)$);
	\end{tikzpicture}
\,&=\,
	\begin{tikzpicture}[baseline=-2mm]
		\node[draw] (T1) at (0,0.5) {$T$};
		\node[draw] (T2) at (0.6,-0.5) {$T$};
		\node[draw] (T3) at (1,0.5) {$T$};
		\foreach \x in {-0.1,0.1}{
			\draw ($(T3.north) + (\x,0)$) edge ($(T3.north) + (\x,0.4)$);
			\draw ($(T2.south) + (\x,0)$) edge ($(T2.south) + (\x,-0.4)$);
		}
		\draw 
			($(T1.north) + (-0.1,0)$) edge ($(T1.north) + (-0.1,0.4)$)
			($(T1.south) + (-0.1,0)$) edge ($(T1.south |- T2.south) + (-0.1,-0.4)$);
		\path 
			($(T1.south) + (+0.1,0)$) edge ($(T2.north) + (-0.1,0)$)
			($(T3.south) + (+0.1,0)$) edge ($(T2.north) + (0.1,0)$);
		\draw ($(T1.north) + (+0.1,0)$) .. controls ($(T1.north) + (0.5,0.5)$) and ($(T3.south) - (0.5,0.5)$) .. ($(T3.south) + (-0.1,0)$);
	\end{tikzpicture}
\,=\,
	\begin{tikzpicture}[baseline=-2mm]
		\node[draw] (T1) at (0,-0.5) {$T$};
		\node[draw] (T2) at (1,-0.5) {$T$};
		\node[draw] (T3) at (.6,0.5) {$T$};
		\foreach \x in {-0.1,0.1}{
			\draw ($(T3.north) + (\x,0)$) edge ($(T3.north) + (\x,0.4)$);
			\draw ($(T2.south) + (\x,0)$) edge ($(T2.south) + (\x,-0.4)$);
		}
		\draw 
			($(T1.south) + (-0.1,0)$) edge ($(T1.south) + (-0.1,-0.4)$)
			($(T1.north) + (-0.1,0)$) edge ($(T1.north |- T3.north) + (-0.1,0.4)$);
		\path 
			($(T1.north) + (+0.1,0)$) edge ($(T3.south) + (-0.1,0)$)
			($(T3.south) + (+0.1,0)$) edge ($(T2.north) + (0.1,0)$);
		\draw ($(T1.south) + (+0.1,0)$) .. controls ($(T1.south) + (0.5,-0.5)$) and ($(T2.north) - (0.5,-0.5)$) .. ($(T2.north) + (-0.1,0)$);
	\end{tikzpicture}
\,=\,
	\begin{tikzpicture}[baseline=-2mm]
		\node[draw] (T1) at (0,-0.5) {$\widehat{T}_3$};
		\node[draw] (T2) at (0.2,0.5) {$T$};
		\foreach \x in {-1,0,1}
			\draw ($(T1.south) + (\x *0.2,0)$) edge ($(T1.south) + (\x *0.2,-0.4)$);
		\foreach \x in {-0.1,0.1}{
			\draw ($(T1.north) + (0.1+\x,0)$) -- ($(T2.south) + (\x,0)$);
			\draw ($(T2.north) + (\x,0)$) edge ($(T2.north) + (\x,0.4)$);
		}
		\draw ($(T1.north) + (-0.2,0)$) edge ($(T1.north |- T2.north) + (-0.2,0.4)$);
	\end{tikzpicture}\quad,
\end{align*}
as required. Now suppose that the claim is true up to $n-1$. Then we have
\begin{align*}
A_n &=\,
	\begin{tikzpicture}[baseline=-2mm]
		\node[draw] (T1) at (0,0.8) {$\widetilde{T}_n$};
		\node[draw] (T2) at (0.6,-0.8) {$A_{n-1}$};
		\draw ($(T1.north) + (0 *0.2,0)$) edge ($(T1.north) + (0 *0.2,0.4)$);
		\draw ($(T1.south) + (0.1+0,0)$) -- ($(T2.north) + (0,0)$);
		\draw ($(T2.south) + (0,0)$) edge ($(T2.south) + (0,-0.4)$);
		\draw ($(T1.south) + (-0.2,0)$) edge ($(T1.south |- T2.south) + (-0.2,-0.4)$);
	\end{tikzpicture}
\,=\,
	\begin{tikzpicture}[baseline=-2mm]
		\node[draw] (T1) at (0,0) {$T$};
		\node[draw] (Tt) at (1,1) {$\widetilde{T}_{n-1}$};
		\node[draw] (B) at (1,-1) {$A_{n-1}$};
		\draw ($(T1.south) + (-0.1,0)$)	edge ($(T1.south |- B.south) + (-0.1,-0.4)$);
		\draw ($(T1.north) + (-0.1,0)$)	edge ($(T1.north |- Tt.north) + (-0.1,0.4)$);
		\path
			($(T1.north) + (0.1,0)$) edge ($(Tt.south) + (-0.3,0)$)
			($(T1.south) + (0.1,0)$) edge ($(B.north) + (-0.3,0)$)
			($(B.north) + (+0.3,0)$)edge ($(Tt.south) + (0.3,0)$);
		\path
			(B.south) edge ($(B.south) + (-0,-0.4)$)
			(Tt.north) edge ($(Tt.north) + (-0,0.4)$);
	\end{tikzpicture}
\,\stackrel{\mathrm{I.H.}}{=}\,
	\begin{tikzpicture}[baseline=-2mm]
		\node[draw] (T1) at (0,0) {$T$};
		\node[draw] (Tt) at (1,1) {$\widetilde{T}_{n-1}$};
		\node[draw] (That) at (1,-1) {$\widehat{T}_{n-1}$};
		\node[draw] (B) at (1.3,0) {$A_{n-2}$};
		\draw ($(T1.south) + (-0.1,0)$)	edge ($(T1.south |- That.south) + (-0.1,-0.4)$);
		\draw ($(T1.north) + (-0.1,0)$)	edge ($(T1.north |- Tt.north) + (-0.1,0.4)$);
		\path
			($(T1.north) + (0.1,0)$) edge ($(Tt.south) + (-0.3,0)$)
			($(T1.south) + (0.1,0)$) edge ($(That.north) + (-0.3,0)$)
			($(That.north) + (+0.3,0)$)edge ($(B.south) + (0.0,0)$)
			($(B.north) + (+0.0,0)$)edge ($(Tt.south) + (0.3,0)$);
		\path
			(That.south) edge ($(That.south) + (-0,-0.4)$)
			(Tt.north) edge ($(Tt.north) + (-0,0.4)$);
	\end{tikzpicture}\\[1em]
&=\,
	\begin{tikzpicture}[baseline=-2mm]
		\node[draw] (T1) at (0,0) {$T$};
		\node[draw] (Tt) at (1,1) {$A_{n-1}$};
		\node[draw] (B) at (1,-1) {$\widehat{T}_{n-1}$};
		\draw ($(T1.south) + (-0.1,0)$)	edge ($(T1.south |- B.south) + (-0.1,-0.4)$);
		\draw ($(T1.north) + (-0.1,0)$)	edge ($(T1.north |- Tt.north) + (-0.1,0.4)$);
		\path
			($(T1.north) + (0.1,0)$) edge ($(Tt.south) + (-0.3,0)$)
			($(T1.south) + (0.1,0)$) edge ($(B.north) + (-0.3,0)$)
			($(B.north) + (+0.3,0)$)edge ($(Tt.south) + (0.3,0)$);
		\path
			(B.south) edge ($(B.south) + (-0,-0.4)$)
			(Tt.north) edge ($(Tt.north) + (-0,0.4)$);
	\end{tikzpicture}
\,=\,
	\begin{tikzpicture}[baseline = -2mm]
		\node[draw] (T) at (0,-.7) {$\widehat{T}_{n}$};
		\node[draw] (B) at (0.5, .7) {$A_{n-1}$};
		\path (T.south) edge ($(T.south) + (0,-.5)$) 
			($(T.north) + (-0.2, 0)$) edge ($(T.north |- B.north) + (-0.2, 0.5)$)
			($(T.north) + (0.2,0)$) edge (B.south)
			($(B.north)$) edge ($(B.north |- B.north) + (-0.0, 0.5)$);
	\end{tikzpicture}
\end{align*}
\end{proof}
\end{lemma}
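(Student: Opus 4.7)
\emph{Proof plan.} The plan is to proceed by induction on $n\ge 3$. For the base case $n=3$, I would unfold both $\widetilde{T}_3$ and $\widehat{T}_3$ according to Definition~\ref{def:T} and simply compare the two resulting diagrams built from three copies of $T$. Because the composition of $A_3$ with $\widetilde{T}_3$ on top produces an arrangement of three $T$-boxes in which the middle box $T$ can be slid freely past the others by planar isotopy (no strings are obstructed), one arrives at the mirror configuration defining the right-hand side; this is exactly the chain of three equalities given in the base case computation. I would present this step by writing the explicit diagrams and verifying the isotopy on the page.

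For the inductive step, assume the claim holds up to $n-1$. I would start from the definition of $A_n$ and unfold the outermost $\widetilde{T}_n$ using Definition~\ref{def:T} (equivalently, using equation~\eqref{eq:concatenation of T} in reverse with $m=2$): this exhibits $A_n$ as three boxes, namely a $T$ on the upper left, a $\widetilde{T}_{n-1}$ on the upper right, and $A_{n-1}$ on the lower right, with one strand contracting the two upper boxes and one strand connecting $\widetilde{T}_{n-1}$ to $A_{n-1}$. The crucial move is then to apply the induction hypothesis to the sub-diagram consisting of $\widetilde{T}_{n-1}$ stacked on $A_{n-1}$ (which after the unfolding has exactly the shape of an $A_{n-1}$ built in the ``top'' convention), rewriting it in the ``bottom'' convention: $\widehat{T}_{n-1}$ on the bottom and $A_{n-2}$ on the top.

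After this substitution the diagram contains four boxes, $T$, $\widetilde{T}_{n-1}$ (now on top of the central region), $A_{n-2}$, and $\widehat{T}_{n-1}$ at the bottom. I would then re-group: the $T$ together with $\widetilde{T}_{n-1}$ in the top portion is precisely an instance of equation~\eqref{eq:concatenation of T} combining to form $A_{n-1}$ in its original top-convention shape, while the $T$ together with $\widehat{T}_{n-1}$ in the bottom portion combines (by the $\widehat{T}$-analogue of that concatenation identity) to form $\widehat{T}_n$. Re-reading the resulting diagram yields exactly the right-hand side of the lemma, completing the induction.

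The main obstacle will be purely diagrammatic bookkeeping: one must track carefully which strands get contracted in which step, and verify that each regrouping is an honest planar isotopy rather than a strand-crossing rearrangement. In particular, the step in which the four-box intermediate diagram gets re-parsed into $\widehat{T}_n$ below and $A_{n-1}$ above relies on the fact that the connecting strands respect the ``diagonal'' shape imposed by Definition~\ref{def:T}, so that the upper half of the diagram genuinely matches $\widetilde{T}_n$ composed into $A_{n-1}$. Once this planar compatibility is checked, the induction closes without any algebraic computation beyond equation~\eqref{eq:concatenation of T} and the induction hypothesis.
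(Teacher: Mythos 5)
Your proposal follows the paper's proof essentially step for step: the same induction, the same base case by explicit planar isotopy of three $T$-boxes, and the same inductive step in which a single $T$ is peeled off the left of $\widetilde{T}_n$, the induction hypothesis is applied to the $A_{n-1}$ box, and the resulting four-box diagram ($T$, $\widetilde{T}_{n-1}$, $A_{n-2}$, $\widehat{T}_{n-1}$) is regrouped into the claimed form.

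However, the final regrouping as you describe it would fail if carried out literally. You assign the peeled-off $T$ to both halves of the diagram: once ``together with $\widetilde{T}_{n-1}$'' to form the upper $A_{n-1}$, and once together with $\widehat{T}_{n-1}$ to form $\widehat{T}_n$. A single internal box can be absorbed into only one of the two groups, and in any case $T$ concatenated with $\widetilde{T}_{n-1}$ via \eqref{eq:concatenation of T} yields $\widetilde{T}_n$, a $2n$-legged tangle, not the $2(n-1)$-legged $A_{n-1}$. The regrouping that your four-box intermediate diagram actually supports is the following: the upper $A_{n-1}$ is reassembled from $\widetilde{T}_{n-1}$ sitting over $A_{n-2}$ by Definition~\ref{def:B} (the concatenation identity plays no role there), while the $T$ is absorbed only downward, $T$ concatenated with $\widehat{T}_{n-1}$ giving $\widehat{T}_n$ by the hatted analogue of \eqref{eq:concatenation of T}. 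With that correction --- together with the cosmetic one that the sub-diagram to which the induction hypothesis is applied is $A_{n-1}$ itself, i.e.\ $\widetilde{T}_{n-1}$ stacked on $A_{n-2}$, not ``$\widetilde{T}_{n-1}$ stacked on $A_{n-1}$'' --- your argument coincides with the one in the paper.
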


This lemma allows us to switch between two ways of writing $A_n$, which ever way is more comfortable to work with in a given situation. The statement $A_n^*\cdot A_n\propto\mathbf{1}_n$, for example, is now a simple corollary of {Lemma \ref{lem:Bcommuting}} and the remark made after {Definition \ref{def:B}}.

One thing we should note is that if we interpret $\widehat{\phantom{T}}_n$ and $\widetilde{\phantom{T}}_n$ as $n$-tangles with $n-1$ internal boxes, then
\begin{align*}
\widetilde{T}_n^* = \widehat{T^*}_n,
\end{align*}
so that $\widetilde{T}_n\cdot \widehat{T^*}_n\propto\mathbf{1}_n$ by {Lemma \ref{lem:T=1}}.

To prove that the rotations of $A_n$ are also `unitary', we still need to record a few more facts, like the following lemma.
\begin{lemma}\label{lem:half rotated  TT}
Let $1\leq l < n-1$, $n\geq 3$. Then
\begin{align*}
	\begin{tikzpicture}[baseline=-1mm]
			\node[draw] (Tdag) at (0,.7) {$\widetilde{T}_n^*$};
			\node[draw] (T) at (0,-.7) {$\widetilde{T}_n$};
			\path (Tdag.north) edge ($(Tdag.north) + (0,0.5)$)
				(T.south) edge ($(T.south) + (0,-.5)$)
				($(T.north) + (0.2,0)$) edge node[right]{$n-l$} ($(Tdag.south) + (0.2,0)$)
				($(T.north) + (-0.6,0)$) edge node[left] {$l$} ($(T.south) + (-0.6,-0.5)$)
				($(Tdag.south) + (-0.6,0)$) edge node[left] {$l$} ($(Tdag.north) + (-0.6,0.5)$);
			\draw ($(T.north) + (-.2,0)$) arc [start angle=0, end angle=180, radius=0.2cm];
			\draw ($(Tdag.south) + (-.2,0)$) arc [start angle=0, end angle=-180, radius=0.2cm];
	\end{tikzpicture}
\, \propto \,
	\begin{tikzpicture}[baseline=-1mm]
			\node[draw] (Tdag) at (0,.7) {$\widetilde{T}_{l+1}^*$};
			\node[draw] (T) at (0,-.7) {$\widetilde{T}_{l+1}$};
			\path (Tdag.north) edge ($(Tdag.north) + (0,0.5)$)
				(T.south) edge ($(T.south) + (0,-.5)$)
				($(T.north) + (0.2,0)$) edge node[right]{$1$} ($(Tdag.south) + (0.2,0)$)
				($(T.north) + (-0.6,0)$) edge node[left] {$l$} ($(T.south) + (-0.6,-0.5)$)
				($(Tdag.south) + (-0.6,0)$) edge node[left] {$l$} ($(Tdag.north) + (-0.6,0.5)$);
			\draw ($(T.north) + (-.2,0)$) arc [start angle=0, end angle=180, radius=0.2cm];
			\draw ($(Tdag.south) + (-.2,0)$) arc [start angle=0, end angle=-180, radius=0.2cm];
			\draw ($(T.south) + (,-0.5)$) edge node[right]{$n-(l+1)$} ($(Tdag.north) + (1,0.5)$);
	\end{tikzpicture}\,,
\end{align*}
where a string with a 0 next to it is interpreted as no string at all.
\begin{proof}
A computation with careful consideration of the number of strings, and a trick involving \eqref{eq:concatenation of T}, shows the result. We will not write the string count next to each string, so you should definitely do that to help you understand what's going on.

Fix any $n\geq 3$. Here is the calculation:
\begin{align*}
	\begin{tikzpicture}[baseline=-1mm]
			\node[draw] (Tdag) at (0,.7) {$\widetilde{T}_n^*$};
			\node[draw] (T) at (0,-.7) {$\widetilde{T}_n$};
			\path (Tdag.north) edge ($(Tdag.north) + (0,0.5)$)
				(T.south) edge ($(T.south) + (0,-.5)$)
				($(T.north) + (0.2,0)$) edge node[right]{$n-l$} ($(Tdag.south) + (0.2,0)$)
				($(T.north) + (-0.6,0)$) edge node[left] {$l$} ($(T.south) + (-0.6,-0.5)$)
				($(Tdag.south) + (-0.6,0)$) edge node[left] {$l$} ($(Tdag.north) + (-0.6,0.5)$);
			\draw ($(T.north) + (-.2,0)$) arc [start angle=0, end angle=180, radius=0.2cm];
			\draw ($(Tdag.south) + (-.2,0)$) arc [start angle=0, end angle=-180, radius=0.2cm];
	\end{tikzpicture}
\, &= \,
	\begin{tikzpicture}[baseline=-1mm]
		\coordinate (yValue) at (0,1);
		\coordinate (xValue) at (0.7,0);
		\node[draw] (LD) at ($(yValue)-(xValue)$) {$\widetilde{T}_{l+1}^*$};
		\node[draw] (ND) at ($(yValue)+(xValue)$) {$\widetilde{T}_{n-l}^*$};
		\node[draw] (L) at ($(0,0)-(xValue)-(yValue)$) {$\widetilde{T}_{l+1}$};
		\node[draw] (N) at ($(xValue)-(yValue)$) {$\widetilde{T}_{n-l}$};
		\coordinate (yMin) at ($(L.south) + (0,-0.4)$);
		\coordinate (yMax) at ($(LD.north) + (0,0.4)$);
		\foreach \x in {-0.3 -0.4, 0.3}
			\draw ($(L.north) + (\x,0)$) arc[start angle=180, end angle=0, radius=2mm];
		\draw ($(LD.south) + (0.3+0.4,0)$) -- ($(ND.north) + (-0.3-0.4,0)$) arc[start angle=180, end angle=0, radius=2mm];
		\foreach \x in {-0.3 -0.4, 0.3}
			\draw ($(LD.south) + (\x,0)$) arc[start angle=-180, end angle=0, radius=2mm];
		\draw ($(L.north) + (0.3+0.4,0)$) -- ($(N.south) + (-0.3-0.4,0)$) arc[start angle=-180, end angle=0, radius=2mm];
		\foreach \x in {LD.north, ND.north}
			\draw (\x) -- (\x |- yMax);
		\draw ($(LD.south)+(-0.3-0.4,0)$) -- ($(LD.south |- yMax)+(-0.3-0.4,0)$);
		\foreach \x in {L.south, N.south}
			\draw (\x) -- (\x |- yMin);
		\draw ($(L.north)+(-0.3-0.4,0)$) -- ($(L.north |- yMin)+(-0.3-0.4,0)$);
		\draw (ND.south) -- (N.north);
	\end{tikzpicture}
\, \propto \,
	\begin{tikzpicture}[baseline=-1mm]
		\coordinate (yValue) at (0,1);
		\coordinate (xValue) at (0.7,0);
		\node[draw] (LD) at ($(yValue)-(xValue)$) {$\widetilde{T}_{l+1}^*$};
		\node[draw] (L) at ($(0,0)-(xValue)-(yValue)$) {$\widetilde{T}_{l+1}$};
		\coordinate (yMin) at ($(L.south) + (0,-0.4)$);
		\coordinate (yMax) at ($(LD.north) + (0,0.4)$);
		\foreach \x in {-0.3 -0.4}
			\draw ($(L.north) + (\x,0)$) arc[start angle=180, end angle=0, radius=2mm];
		\foreach \x in {-0.3 -0.4}
			\draw ($(LD.south) + (\x,0)$) arc[start angle=-180, end angle=0, radius=2mm];
		\foreach \x in {LD.north}
			\draw (\x) -- (\x |- yMax);
		\draw ($(LD.south)+(-0.3-0.4,0)$) -- ($(LD.south |- yMax)+(-0.3-0.4,0)$);
		\foreach \x in {L.south}
			\draw (\x) -- (\x |- yMin);
		\draw ($(L.north)+(-0.3-0.4,0)$) -- ($(L.north |- yMin)+(-0.3-0.4,0)$);
		\draw ($(LD.south)+(0.3,0)$) -- ($(L.north)+(0.3,0)$);
		\draw ($(yMin) + (1.2,0)$) -- node[right]{$n-l-1$} ($(yMax) + (1.2,0)$);
	\end{tikzpicture}
\end{align*}
The first equality is the trick, the second (well, not equality but proportionality) uses {Lemma \ref{lem:T=1}}.
\end{proof}
\end{lemma}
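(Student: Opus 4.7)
The plan is to apply the one-strand concatenation identity \eqref{eq:concatenation of T} to split $\widetilde{T}_n$ into smaller $\widetilde{T}$-tangles, and then to collapse an inner pair using Lemma \ref{lem:T=1}. The hypothesis $l<n-1$ is precisely what guarantees that both halves of the split are honest $\widetilde{T}_m$'s with $m\ge 2$, so that Lemma \ref{lem:T=1} is applicable to the inner pair.

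Concretely, I would first use \eqref{eq:concatenation of T} in reverse to write $\widetilde{T}_n$ as $\widetilde{T}_{l+1}$ concatenated on one strand with $\widetilde{T}_{n-l}$ (valid since $(l+1)+(n-l)-1=n$), and dually $\widetilde{T}_n^{*}$ as $\widetilde{T}_{n-l}^{*}$ concatenated with $\widetilde{T}_{l+1}^{*}$ in the vertically reversed order demanded by the adjoint. Substituting both decompositions into the left-hand side of the lemma produces a picture of four boxes in which $\widetilde{T}_{n-l}$ sits directly below $\widetilde{T}_{n-l}^{*}$ with all $n-l$ of their top/bottom legs paired straight: namely the $n-l-1$ ``direct'' strands of the middle contraction, together with the single ``joining'' strand produced by the splitting. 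The outer $\widetilde{T}_{l+1}/\widetilde{T}_{l+1}^{*}$ pair inherits the $l$ cap, the $l$ cup, and the $2l$ external strands unchanged.

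Second, I would apply Lemma \ref{lem:T=1} to the inner block, giving $\widetilde{T}_{n-l}\cdot\widetilde{T}_{n-l}^{*}\propto \mathbf{1}_{n-l}$, which replaces that block by $n-l$ parallel identity strands. One of these identity strands threads back into the outer pair along the joining vertex of the original splitting, producing the single middle-strand contraction of $\widetilde{T}_{l+1}$ with $\widetilde{T}_{l+1}^{*}$ that appears on the right-hand side of the claim; the other $n-l-1$ identity strands run straight through the whole picture, matching the labelled bundle on the right.

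The main obstacle, and the reason the excerpt emphasises ``careful consideration of the number of strings,'' is the bookkeeping needed to confirm that the joining strand of the decomposition lies inside the $(n-l)$-strand contracted region, and that after the inner pair collapses it occupies exactly the position of the surviving middle strand in the claimed picture. Once this is checked (most easily by pinning down conventions in the small case $l=1$, $n=3$, and observing that the general case amounts to re-bundling the remaining $n-l-1$ strands), the desired proportionality is obtained by a straightforward planar isotopy that straightens the throughgoing strands and leaves the external cap, cup, and $l$-bundles untouched.
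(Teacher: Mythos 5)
Your proposal is correct and follows essentially the same route as the paper: the splitting of $\widetilde{T}_n$ and $\widetilde{T}_n^*$ via \eqref{eq:concatenation of T} into an $\widetilde{T}_{l+1}$-part and an $\widetilde{T}_{n-l}$-part is exactly the ``trick'' in the paper's first equality, and collapsing the fully contracted inner $\widetilde{T}_{n-l}$-pair with Lemma \ref{lem:T=1} is its second step. Your strand bookkeeping (one surviving joining strand plus $n-l-1$ throughgoing strands) matches the paper's picture.
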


If you read the previous lemma carefully you will probably have noticed that we excluded the $l=n-1$ case. One reason for this is that we can postpone that part until we are in our main theorem, where we solve it by invoking {Lemma \ref{lem:Bcommuting}}. The other reason is that the formula would be ill-defined, since $n-(n-1) = 1$, but $\widetilde{T}_1$ is undefined.

The following proposition is straightforward, but the result is very useful in one of the last steps in the proof of the main theorem.
\begin{proposition}\label{prop:T rotated n-1 times}
For all $n\geq 2$
\begin{align*}
	\rot^{n-1}\widetilde{T}_n\cdot \left(\rot^{n-1}\widetilde{T}_n\right)^* = 
	\begin{tikzpicture}[baseline=-1mm]
			\node[draw] (Tdag) at (0,.7) {$\widetilde{T}_n^*$};
			\node[draw] (T) at (0,-.7) {$\widetilde{T}_n$};
			\path
				($(Tdag.north) + (-0.2,0)$) edge ($(Tdag.north) + (-0.2,0.5)$)
				($(T.south) + (-0.2,0)$) edge ($(T.south) + (-0.2,-0.5)$)
				($(T.north) + (0.2,0)$) edge ($(Tdag.south) + (0.2,0)$)
				($(T.north) + (-0.6,0)$) edge node[left] {$n-1$} ($(T.south) + (-0.6,-0.5)$)
				($(Tdag.south) + (-0.6,0)$) edge node[left] {$n-1$} ($(Tdag.north) + (-0.6,0.5)$);
			\draw ($(T.north) + (-.2,0)$) arc [start angle=0, end angle=180, radius=0.2cm];
			\draw ($(T.south) + (.2,0)$) arc [start angle=-180, end angle=0, radius=0.2cm] -- node[right]{$n-1$}
				 ($(Tdag.north) + (.6,0)$) arc [start angle=0, end angle=180, radius=0.2cm];
			\draw ($(Tdag.south) + (-.2,0)$) arc [start angle=0, end angle=-180, radius=0.2cm];
	\end{tikzpicture}
\, \propto \mathbf{1}_n
\end{align*}
\end{proposition}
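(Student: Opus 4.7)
The plan is to proceed by induction on $n$, in the spirit of Lemma~\ref{lem:half rotated  TT} but specifically handling the endpoint case $l = n-1$ that was excluded there.

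For the base case $n=2$, note that $\rot^{1}\widetilde{T}_{2} = \rot T$, so the claim reduces to $\rot T \cdot (\rot T)^{*} = T \stackrel{1}{\curvearrowleft} T^{*} \propto \mathbf{1}_{2}$, which follows directly from the perfectness of $T$.

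For the inductive step, I substitute the recursive decomposition $\widetilde{T}_{n} = \widetilde{T}_{n-1} \oplus_{1} T$ into both copies in the diagram and track how the $n-1$ anticlockwise rotations redistribute the $2n$ legs. The key observation is that only the rightmost top leg and rightmost bottom leg of the inner $T$ get identified between the bottom copy and the adjoint copy, forming exactly the strand pattern of $\rot T \cdot (\rot T)^{*}$; the remaining $n-2$ identifications occur between corresponding bottom legs of $\widetilde{T}_{n-1}$ and $\widetilde{T}_{n-1}^{*}$. Invoking the perfectness of $T$ collapses the $T$-$T^{*}$ pair to a scalar multiple of $\mathbf{1}_{2}$, and---crucially---one of the two resulting straight strands threads through the former internal concatenation strand $v_{n-1} = c_{1}$, thereby supplying the unique top-to-top connection needed to place the pair $(\widetilde{T}_{n-1}, \widetilde{T}_{n-1}^{*})$ in exactly the configuration $\rot^{n-2}\widetilde{T}_{n-1} \cdot (\rot^{n-2}\widetilde{T}_{n-1})^{*}$. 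The inductive hypothesis then reduces this configuration to a scalar multiple of $\mathbf{1}_{n-1}$, and combining with the other straight strand (the external top-leg of $T$) supplied by the $T$-reduction yields $\mathbf{1}_{n}$ overall.

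The main obstacle will be the diagrammatic bookkeeping: verifying that the $n-1$ rotation clicks redistribute the $2n$ legs in precisely the claimed way, and confirming that the shared internal strand $v_{n-1} = c_{1}$ indeed plays the dual role described---closing the $\rot T$-pattern on the $T$-side while, upon collapse, providing the single top-to-top link needed on the $\widetilde{T}_{n-1}$-side. An explicit picture analogous to the calculation in the proof of Lemma~\ref{lem:half rotated  TT} would make this transparent.
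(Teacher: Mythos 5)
Your induction is correct and is exactly the argument the paper intends: the paper's own ``proof'' of this proposition is literally ``Omitted. An exercise in mathematical induction for the reader,'' and your inductive step --- peeling off the rightmost $T$ via the recursive definition of $\widetilde{T}_n$, collapsing the resulting $\rot T\cdot(\rot T)^*$ pattern using perfectness of $T$, and recognising that the surviving strand through the internal concatenation arc places the $\widetilde{T}_{n-1}$ pair in the configuration $\rot^{n-2}\widetilde{T}_{n-1}\cdot\bigl(\rot^{n-2}\widetilde{T}_{n-1}\bigr)^*$ --- is the natural way to carry out that exercise, and the leg-tracking you describe does check out (the extra strand from the $T$-collapse lands at the leftmost boundary points and so tensors cleanly with the $\mathbf{1}_{n-1}$ from the inductive hypothesis). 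The only blemish is the undefined notation $v_{n-1}=c_{1}$ for the internal strand joining $\widetilde{T}_{n-1}$ to $T$; replace it with a picture or a sentence and the proof is complete.
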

\begin{proof}Omitted. An exercise in mathematical induction for the reader.
\end{proof}

We are now ready to prove our main result.
\begin{theorem}\label{thm:general_existence}
For all $A_n$ the following are true.
\begin{itemize}
	\item[\emph{\text{(i)}}] For all $1\leq l < n$ we have
		\begin{align*}
		\mathrm{rot}^l A_n \cdot \mathrm{rot}^{-l} A_n^* \propto \mathbf{1}_n.
		\end{align*}
	\item[\emph{\text{(ii)}}] $A_n$ is perfect.
\end{itemize}
\begin{proof}
We will first show (i), then (i)$\implies$(ii). The second part is straigthforward.

For (i), first note that this is trivially true for $A_2$, and that proving it for $A_3$ is a basic computation. The base case is thus covered. Now suppose it is true for $n-1$, and consider first $l< n-1$. Then
\begin{align*}
\mathrm{rot}^l A_n \cdot \mathrm{rot}^{-l} A_n^*
&=\,
	\begin{tikzpicture}[baseline=-1mm]
		\node[draw] (BD) at (1,2) {$A_{n-1}^*$};
		\node[draw] (TD) at (0,0.8) {$\widetilde{T}_n^*$};
		\node[draw] (T) at (0,-0.8) {$\widetilde{T}_n$};
		\node[draw] (B) at (1, -2) {$A_{n-1}$};
		\coordinate (yMax) at ($(BD.north)+(0,0.4)$);
		\coordinate (yMin) at ($(B.south)+(0,-0.4)$);
		\draw 
			($(B.south)+(0.4,0)$) arc[start angle=-180, end angle=0, radius=2mm] -- node[right] {$l$}
			($(BD.north)+(0.8,0)$) arc[start angle=0, end angle=180, radius=2mm];
		\path 
			($(TD.north)+(0.2,0)$) edge ($(BD.south)+(-0.4,0)$)
			($(TD.south)+(0.2,0)$) edgenode[right] {$n-l$}  ($(T.north)+(0.2,0)$)
			($(T.south)+(0.2,0)$) edge ($(B.north)+(-0.4,0)$);
		\draw
			($(TD.south)+(-0.2,0)$) arc[start angle=0, end angle=-180, radius=2mm] edge node[left] {$l$} ($(TD.south |- yMax)+(-0.6,0)$);
		\draw
			($(T.north)+(-0.2,0)$) arc[start angle=0, end angle=180, radius=2mm] edge node[left] {$l$} ($(TD.south |- yMin)+(-0.6,0)$);
		\path
			($(BD.north)+(-0.2,0)$) edge ($(BD.north |- yMax)+(-0.2,0)$)
			($(TD.north)+(-0.2,0)$) edge ($(TD.north |- yMax)+(-0.2,0)$)
			($(T.south)+(-0.2,0)$) edge ($(T.south |- yMin)+(-0.2,0)$)
			($(B.south)+(-0.2,0)$) edge ($(B.south |- yMin)+(-0.2,0)$);
	\end{tikzpicture}
\,=\,
	\begin{tikzpicture}[baseline=-1mm]
		\node[draw] (BD) at (1,2) {$A_{n-1}^*$};
		\node[draw] (TD) at (0,0.8) {$\widetilde{T}_{l+1}^*$};
		\node[draw] (T) at (0,-0.8) {$\widetilde{T}_{l+1}$};
		\node[draw] (B) at (1, -2) {$A_{n-1}$};
		\coordinate (yMax) at ($(BD.north)+(0,0.4)$);
		\coordinate (yMin) at ($(B.south)+(0,-0.4)$);
		\draw 
			($(B.south)+(0.4,0)$) arc[start angle=-180, end angle=0, radius=2mm] -- node[right] {$l$}
			($(BD.north)+(0.8,0)$) arc[start angle=0, end angle=180, radius=2mm];
		\path 
			($(TD.north)+(0.2,0)$) edge node[right]{$l$} ($(BD.south)+(-0.4,0)$)
			($(TD.south)+(0.2,0)$) edge node[right] {$1$}  ($(T.north)+(0.2,0)$)
			($(T.south)+(0.2,0)$) edge ($(B.north)+(-0.4,0)$);
		\draw
			($(BD.south)+(+0.4,0)$) edge node[above, rotate=90]{{\small$n-1-l$}}  ($(B.north)+(0.4,0)$);
		\draw
			($(TD.south)+(-0.2,0)$) arc[start angle=0, end angle=-180, radius=2mm] edge node[left] {$l$} ($(TD.south |- yMax)+(-0.6,0)$);
		\draw
			($(T.north)+(-0.2,0)$) arc[start angle=0, end angle=180, radius=2mm] edge node[left] {$l$} ($(TD.south |- yMin)+(-0.6,0)$);
		\path
			($(BD.north)+(-0.2,0)$) edge ($(BD.north |- yMax)+(-0.2,0)$)
			($(TD.north)+(-0.2,0)$) edge ($(TD.north |- yMax)+(-0.2,0)$)
			($(T.south)+(-0.2,0)$) edge ($(T.south |- yMin)+(-0.2,0)$)
			($(B.south)+(-0.2,0)$) edge ($(B.south |- yMin)+(-0.2,0)$);
	\end{tikzpicture}\\[2em]
\, &\propto \,
	\begin{tikzpicture}[baseline=-1mm]
			\node[draw] (Tdag) at (0,.7) {$\widetilde{T}_{l+1}^*$};
			\node[draw] (T) at (0,-.7) {$\widetilde{T}_{l+1}$};
			\path
				($(Tdag.north) + (-0.2,0)$) edge ($(Tdag.north) + (-0.2,0.5)$)
				($(T.south) + (-0.2,0)$) edge ($(T.south) + (-0.2,-0.5)$)
				($(T.north) + (0.2,0)$) edge ($(Tdag.south) + (0.2,0)$)
				($(T.north) + (-0.6,0)$) edge node[left] {$l$} ($(T.south) + (-0.6,-0.5)$)
				($(Tdag.south) + (-0.6,0)$) edge node[left] {$l$} ($(Tdag.north) + (-0.6,0.5)$);
			\draw ($(Tdag.north) + (1.5,0.5)$) -- node[right]{$n-1-l$} ($(T.south) + (1.5,-0.5)$);
			\draw ($(T.north) + (-.2,0)$) arc [start angle=0, end angle=180, radius=0.2cm];
			\draw ($(T.south) + (.2,0)$) arc [start angle=-180, end angle=0, radius=0.2cm] -- node[right]{$l$}
				 ($(Tdag.north) + (.6,0)$) arc [start angle=0, end angle=180, radius=0.2cm];
			\draw ($(Tdag.south) + (-.2,0)$) arc [start angle=0, end angle=-180, radius=0.2cm];
	\end{tikzpicture}\\[2em]
&\propto \mathbf{1}_n
\end{align*}
What's happening here is the following. The first step is writing out the definition. The second is applying {Lemma \ref{lem:half rotated  TT}}, after which we are at the induction step: it's true that $\mathrm{rot}^l A_{n-1} \cdot \mathrm{rot}^{-l} A_{n-1}^* \propto \mathbf{1}_{n-1}$. Note that the largest value of $l$ here is $(n-1)-1$.
Last but not least, we invoke {Proposition \ref{prop:T rotated n-1 times}}, which gives the desired result.

For $l=n-1$ note that the rotation bends \emph{all} legs of $A_{n-1}$ up- and all legs of $A_{n-1}^*$ downwards, thereby contracting the $A$s along half of their legs. That is we have the product $A_{n-1}^*\cdot  A_{n-1}$, which, as we have argued before, is proportional to $\mathbf{1}_{n-1}$. Then all that's left is applying {Proposition \ref{prop:T rotated n-1 times}} one time and the proof of part (i) is finished.

\bigskip
Part (ii) of the theorem follows from 
\begin{align*}
0\leq l <k : \mathrm{rot}^{k+l} T\cdot\mathrm{rot}^{-(k+l)}T^* \propto \mathbf{1}_k \quad\iff\quad \mathrm{rot}^{-l}T^* \cdot \mathrm{rot}^{l} T\propto \mathbf{1}_k,
\end{align*}
and an argument about the validity of `horizontally flipped' versions of {Lemma \ref{lem:half rotated  TT}} and {Proposition \ref{prop:T rotated n-1 times}}, after seeing that the base case is (almost trivially) true.
\end{proof}
\end{theorem}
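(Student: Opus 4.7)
The plan is to prove both parts by induction on $n$, establishing (i) as the core technical content and then deriving (ii) from it via an adjoint/symmetry argument.

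\textbf{Base cases.} For $A_2 = T$ the statement is just perfectness of $T$. For $A_3$ the identities reduce, after unpacking the definition, to two applications of Lemma \ref{lem:T=1} combined with perfectness of $T$; this is a direct diagrammatic computation.

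\textbf{Inductive step for (i), generic case $1 \le l < n-1$.} I would unfold $A_n$ using Definition \ref{def:B} (using Lemma \ref{lem:Bcommuting} to swap between the two equivalent presentations whenever convenient). The product $\mathrm{rot}^l A_n \cdot \mathrm{rot}^{-l} A_n^*$ then depicts as a central $\widetilde{T}_n^*, \widetilde{T}_n$ sandwich with $n-l$ strands running between them and $l$ strands bent off into cups and caps on the left, together with $A_{n-1}^*$ on top and $A_{n-1}$ on bottom hanging off to the right, joined to the $\widetilde{T}$ boxes by one strand each and to each other through $n-1-l$ parallel strands. Lemma \ref{lem:half rotated TT} collapses the central $\widetilde{T}_n^*$--$\widetilde{T}_n$ sandwich to a $\widetilde{T}_{l+1}^*$--$\widetilde{T}_{l+1}$ sandwich with $n-1-l$ strands passing straight through on the right. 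These straight strands now sit in exactly the configuration required to apply the inductive hypothesis $\mathrm{rot}^l A_{n-1} \cdot \mathrm{rot}^{-l} A_{n-1}^* \propto \mathbf{1}_{n-1}$ to the $A_{n-1}^*, A_{n-1}$ pair. What remains after that replacement is the precise diagram of Proposition \ref{prop:T rotated n-1 times} applied to $\widetilde{T}_{l+1}$, which equals $\mathbf{1}_n$ up to a scalar.

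\textbf{Boundary case $l = n-1$.} Here the rotation bends all $n-1$ nonprivileged strands of $A_{n-1}$ upward and all $n-1$ nonprivileged strands of $A_{n-1}^*$ downward, so the two boxes meet directly as $A_{n-1}^* \cdot A_{n-1}$, which is proportional to $\mathbf{1}_{n-1}$ by the remark right after Lemma \ref{lem:Bcommuting} (using Lemma \ref{lem:T=1} applied to $\widehat{T}_{n-1}$). A single application of Proposition \ref{prop:T rotated n-1 times} then yields the claim.

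\textbf{Part (ii).} Perfectness requires both rotated products $A_n \rotatedmultiplication{l} A_n^\dagger$ and $A_n^\dagger \stackrel{l}{\curvearrowright} A_n$ to be proportional to $\mathbf{1}_n$. Part (i) establishes the first; taking adjoints of the identities in (i) (noting that proportionality to the identity is preserved and that the adjoint swaps the two rotation senses) gives the second. Alternatively, the same inductive argument runs verbatim with the horizontally mirrored versions of Lemma \ref{lem:half rotated TT} and Proposition \ref{prop:T rotated n-1 times}.

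The main obstacle is purely diagrammatic bookkeeping in the generic inductive step: one must track three distinct groups of strands (the $l$ rotated ones forming the outer cups/caps, the $n-l$ middle strands joining $\widetilde{T}_n$ to $\widetilde{T}_n^*$, and the $n-1-l$ strands connecting the two copies of $A_{n-1}$) and verify that after applying Lemma \ref{lem:half rotated TT} the resulting picture both exposes the inductive hypothesis and matches the hypothesis of Proposition \ref{prop:T rotated n-1 times}. The verification is routine once the bookkeeping is in place, but it is the one point at which an error in string counts would invalidate the whole induction.
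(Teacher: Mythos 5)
Your proposal follows the paper's own argument essentially step for step: the same base cases, the same unfolding of $A_n$ followed by Lemma \ref{lem:half rotated  TT}, the inductive hypothesis applied to the $A_{n-1}^*$, $A_{n-1}$ pair, and Proposition \ref{prop:T rotated n-1 times} to finish, with the $l=n-1$ boundary case and the adjoint/mirrored-lemma argument for part (ii) also matching. The string-count bookkeeping you flag as the main risk is indeed where the paper spends its care, and your accounting of the three strand groups agrees with the diagrams in the paper's proof.
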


The perfect tangles constructed in this section are certainly very pretty. In \figref{fig:examples_general_construction} we see a few examples which are in $P_{2n}$ for $n=3,4,5,6$, respectively. The perfect tangle $T\in P_{2\cdot 2}$ is represented as a vertex and should be interpreted as being in standard form.
\begin{figure}[!htp]\centering
\begin{tikzpicture}[inner sep=0.5mm,scale=1]
	\foreach \order in {1,...,4}{
		\begin{scope}[xshift=1/2*\order*\order cm]
			\coordinate (yMax) at (0, .7+ 0.5*\order);
			\coordinate (yMin) at (0,-.7-.5*\order);
			\foreach \x in {0,...,\order}{
				\foreach \y in {0,...,\x}{		
					\node[draw, fill,circle] (x\x) at (.8*\x,.5*\x -1*\y) {};
					\ifthenelse{\NOT \x = \order}{
						\draw (x\x) -- ($(x\x) + (.8, .5 - 1)$);
						\draw (x\x) -- ($(x\x) + (.8, -.5 + 1)$);
						\ifthenelse{\y=\x}{
							\ifthenelse{\y=0}{\draw (x\x) -- (x\x |- yMax);}{}
							\draw (x\x) -- (x\x |- yMin);}{\ifthenelse{\y=0}{\draw (x\x) -- (x\x |- yMax);}{}
						}
					}{
						\ifthenelse{\y=0}{
							\draw (x\x) -- ($(x\x) + (0,-\order)$);
							\draw ($(x\x.north) + (0.05,-0.05)$) -- ($(x\x.north |- yMax) + (0.05, 0)$);
							\draw ($(x\x.north) - (0.05, 0.05)$) -- ($(x\x.north |- yMax) - (0.05,0)$);
						}{
							\ifthenelse{\y=\x}{
							\draw ($(x\x.south) - (0.05,-0.05)$) -- ($(x\x.south |- yMin) - (0.05, 0)$);
							\draw ($(x\x.south) +(0.05, 0.05)$) -- ($(x\x.south |- yMin) + (0.05,0)$);
							}{}
						}
					}
				}
			}
			\ifthenelse{\NOT \order=4}{\node (comma) at (0.2+.8*\order,0) {$,$};}{}
		\end{scope}
	}
\end{tikzpicture}
\caption[Examples of perfect tangles obtained by the iterative construction proved in {Theorem \ref{thm:general_existence}}]{The perfect tangles $A_3,A_4,A_5,A_6$. Vertices represent the generating perfect 2-tangle $T$ in standard form.}
\label{fig:examples_general_construction}
\end{figure}
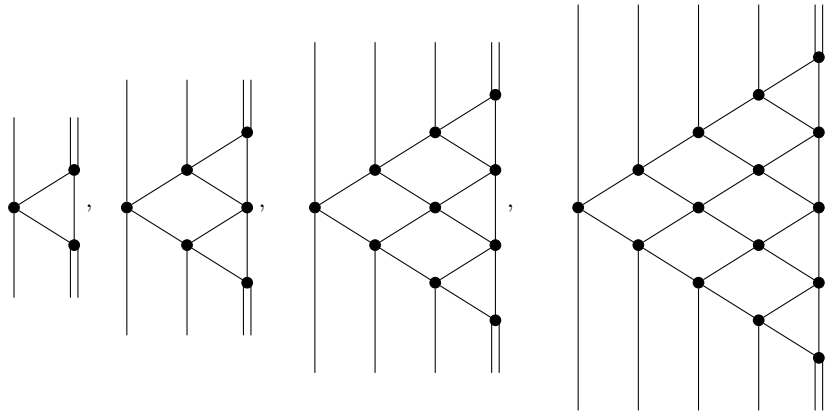

It is clear that what this theorem actually does is that it gives us for each $n$ an $n$-tangle with $\frac{n(n-1)}{2}$ internal 2-boxes, such that if we fix a perfect 2-tangle and insert a copy into each box the resulting element is perfect.


Letting $\mathcal{P}_n$ denote the set of perfect $n$-tangles --- including adjoints and rotations of tangles already in $\mathcal{P}_n$ --- then {Theorem \ref{thm:general_existence}} shows a lower bound on the size of this set:
\begin{align}\label{eq:lower_bound no1}
\lvert \mathcal{P}_2 \rvert \leq \lvert \mathcal{P}_n \rvert \quad \forall n\geq 2.
\end{align}
So whenever $\mathcal{P}_2\neq \emptyset$, the general existence of perfect tangles in spaces indexed by an even color is established.

\subsection{Code}
The code employed to deduce the examples of this Letter can be run using \texttt{sage}, an open source computer algebra system which interfaces many wonderful little helpers, like \emph{Singular}, \emph{GAP}, or various \emph{python} libraries such as \emph{sympy}. \texttt{sage} itself is python-based and comes with a very good documentation, which may readily be found on-line.

Finding the relevant equations can be done using \texttt{sage}, but usually only an additional use of \emph{Mathematica's} powerful \texttt{Reduce} feature allows us to find actual solutions.

We assume the reader is familiar with the basics of programming. The syntax of python and thus \texttt{sage} is very easy to comprehend, the only thing a bit challenging at first might be list comprehension, but that is quickly learned.

The entire (more-or-less documented and therefore self-explanatory) code produced for this Letter can be found on the author's \emph{github} at
\begin{center}
\url{github.com/zerschmetterling/ppt_code}
\end{center}

\end{document}